\theoremstyle{definition}
\newtheorem{question*}{Question}
\newcommand{\abs}[1]{\lvert #1 \rvert}
\newcommand{\norm}[1]{\lVert #1 \rVert}
\newcommand{\tr}{\mathrm{Tr}}
\newcommand{\pc}{P_{\mathcal C}}
\def\ket#1{| #1 \rangle}
\def\kb#1#2{|#1\rangle\!\langle #2 |}
\providecommand{\keywords}[1]
{
  \small	
  \textbf{\textit{Keywords.}} #1
} 
\providecommand{\subjclass}[1]
{
  \small	
  \textbf{\textit{2020 Mathematics Subject Classification.}} #1
}
\theoremstyle{plain}
\newtheorem{theorem}{Theorem}
\newtheorem{lemma}[theorem]{Lemma}
\newtheorem{corollary}[theorem]{Corollary}
\newtheorem{proposition}[theorem]{Proposition}
\theoremstyle{definition}
\newtheorem{definition}[theorem]{Definition}
\newtheorem{example}[theorem]{Example}
\theoremstyle{remark}
\newtheorem*{remark}{Remark}
\begin{document}

\title{Generalized Knill--Laflamme Theorem for Families of Isoclinic Subspaces}

\author{David~W.~Kribs$^1$, Rajesh Pereira$^1$, Mukesh Taank$^1$}

\affil{$^1$Department of Mathematics \& Statistics, University of Guelph, Guelph, ON, Canada N1G 2W1. Email \href{mailto:dkribs@uoguelph.ca}{dkribs@uoguelph.ca},  \href{mailto:pereirar@uoguelph.ca}{pereirar@uoguelph.ca},  \href{mailto:mtaank@uoguelph.ca}{mtaank@uoguelph.ca} }

\date{\today}
\maketitle

\begin{abstract}
Isoclinic subspaces have been studied for over a century. Quantum error correcting codes were recently shown to define a subclass of families of isoclinic subspaces. The Knill--Laflamme Theorem is a seminal result in the theory of quantum error correction, a central topic in quantum information. We show there is a generalized version of the Knill--Laflamme result and conditions that applies to all families of isoclinic subspaces. In the case of quantum stabilizer codes, the expanded conditions are shown to capture logical operators. We apply the general conditions to give a new perspective on a classical subclass of isoclinic subspaces defined by the graphs of anti-commuting unitary operators. We show how the result applies to recently studied mutually unbiased quantum measurements (MUMs), and we give a new construction of such measurements motivated by the approach. 
\end{abstract}

\keywords{isoclinic subspaces, quantum error correcting codes, mutually unbiased measurements, anti-commuting matrices.}

\subjclass{15A27,46N50,47A20,81P45.}

\section{Introduction}

The study of families of isoclinic subspaces naturally grew from the introduction of canonical angles between subspaces in Euclidean geometry, as initiated by Jordan \cite{jordan1875essai} a century and a half ago, which in turn was built on earlier work of Hamilton \cite{hamilton1844theory}. Subsequently, the notion has arisen and been studied in a variety of settings in the context of matrix theory and operator theory; as a selection of examples, we note the (relatively) more recent structural results and formulations of equivalent conditions and applications found in \cite{balla2019equiangular, bjorck1973numerical, garling2011clifford, hoggar1977new, wong1960clifford, wong1977linear, zhang1997quaternions} (see also references therein and forward references). 

On the other hand we have the subject of quantum error correction, which has much more recent origins going back to the early days of modern quantum information theory in the 1990's  \cite{bennett1996ch,gottesman1996d,knill1997knill,knill2000theory,kribs2005quantum,nielsen2002quantum, shor1995pw, steane1996error}. With original (and continuing) motivations coming from the goal to build quantum computers, quantum error correction seems to now touch on all areas of quantum information more generally. 
The Knill--Laflamme Theorem \cite{knill1997knill} is a seminal result in the theory of quantum error correction, and is arguably one of the most important results in all of quantum information theory. It built on important early examples of quantum error correcting codes and played a significant role in giving the subject a firm mathematical and theoretical foundation from which to grow, as a forward reference search on \cite{knill1997knill} will confirm. The theorem gives explicit algebraic conditions to test if a quantum code is correctable for a given set of error operators. It has had many applications in the subject of quantum error correction and beyond; we mention Gottesman's stabilizer formalism \cite{gottesman1996d}, which shows how to construct codes for Pauli error models, as an important such instance. In the context of this discussion, we also note it was recently observed \cite{kribs2019isoclinic} that quantum error correcting codes define a subclass of isoclinic subspace families. 

It is natural to ask, therefore, if there is a generalization of the Knill--Laflamme conditions and theorem that describes all families of isoclinic subspaces? In this paper, we present a positive answer to this question. 
Specifically, we identify and establish the existence of generalized Knill--Laflamme conditions for families of isoclinic subspaces, which in operator form look like this: 
\begin{equation*}
P A_i^* A_j P = \lambda_{ij} U_{ij} P = \lambda_{ij} P U_{ij} ,     
\end{equation*}
where $\{A_i\}_i$ are operators on a Hilbert space, $A_i^*$ are the (Hilbert space) adjoints, $P$ is an orthogonal projection on the space, $\lambda_{ij}$ are complex scalars, and $U_{ij}$ are unitary operators that commute with the projection (or equivalently, partial isometries with $P$ as their initial and final projections). 

The Knill--Laflamme conditions are captured in the special case in which each unitary $U_{ij}$ is the identity operator. We discuss that subclass, giving a brief review and an explanation of how the broadened conditions capture logical operators for stabilizer codes. We revisit a classical family of isoclinic subspaces that are built from graphs of anti-commuting unitary operators \cite{wong1960clifford, wong1961isoclinic, wong1977linear, wong1990normally}, finding a new perspective motivated by the generalized conditions. We show how the result applies to recently studied mutually unbiased quantum measurements (MUMs) \cite{farkas2023mutually, tavakoli2021mutually}, where we apply the conditions to find an alternate proof of their canonical forms, and we give a new construction of such measurements motivated by the approach.     

This paper is organized as follows. Section~\ref{sec:prelims} includes the basic details of isoclinic subspaces and an important equivalent condition we will make use of. Section~\ref{sec:K-L} includes the derivation of the generalized Knill--Laflamme conditions. In Section~\ref{sec:Apps} we consider the three subclasses of isoclinic subspace families noted above in light of these results. We finish in Section~\ref{sec:Conc} with some concluding remarks.

\section{Preliminaries} \label{sec:prelims}

The origins of investigations into families of isoclinic subspaces go back at least a century and a half, where we find the classical notion of canonical angles between pairs of subspaces (sometimes referred to as principal angles) as formulated by Jordan \cite{jordan1875essai}.

Suppose that $\mathcal V$ and $\mathcal W$ are finite dimensional subspaces of a Hilbert space $\mathcal{H}$ and let $k = \min\{\dim(\mathcal V), \dim(\mathcal W)\}.$ Then the {\it canonical angles} $\{ \theta_1, \ldots, \theta_l  \}$ between $\mathcal V$ and $\mathcal W$ are defined as follows: the first canonical angle is the unique number $\theta_1 \in [0,\frac{\pi}{2}]$ such that
\[
\cos(\theta_1) = \max \{\abs{\langle x, y \rangle} : x \in\mathcal V, y\in\mathcal W, \norm{x} = \norm{y} = 1\}.
\]
Let $x_1$ and $y_1$ be unit vectors in $\mathcal V$ and $\mathcal W$ for which the previous maximum is attained. Then we define the second canonical angle as the unique number $\theta_2 \in [0,\frac{\pi}{2}]$ such that
\[
\cos(\theta_2) = \max \{\abs{\langle x, y \rangle} : x \in\mathcal V\cap \{x_1\}^\perp, y \in\mathcal W \cap \{ y_1\}^\perp, \norm{x} = \norm{y} = 1 \}.
\]
For each $k \leq l$, similarly choose unit vectors $x_2, \ldots x_{k-1}$ and $y_2, \ldots y_{k-1}$ in $\mathcal V$ and $\mathcal W$ respectively, in each case where the previous maximum is attained. Then  $\theta_k$ is taken to be the unique number such that
$\cos(\theta_k)$ is equal to the maximum of $\abs{\langle x, y \rangle}$ with unit vectors $x \in\mathcal V \cap \{ x_1,\ldots , x_{k-1} \}^\perp$ and $y \in \mathcal W \cap \{ y_1,\ldots , y_{k-1} \}^\perp$.
More recently (though still over fifty years ago), Bjorck and Golub \cite{bjorck1973numerical} found a computationally-friendly way to find the canonical angles, showing that they can be characterized in terms of the singular values of the product of two matrices that encode their respective subspace.

As the name suggests, subspaces of the same dimension are isoclinic when all their canonical angles are the same. 

\begin{definition}
{\rm
Let $\mathcal V$ and $\mathcal W$ be two $k$-dimensional subspaces of a Hilbert space $\mathcal{H}$, where $1\le k < \dim(\mathcal{H})$. Then $\mathcal V$ and $\mathcal W$ are said to be {\it isoclinic} if all $k$ canonical angles between $\mathcal V$ and $\mathcal W$ are equal. If that angle is $\theta$, then the subspaces are said to be {\it isoclinic at angle $\theta$}.  A family of $k$-dimensional subspaces of a Hilbert space are said to be isoclinic if all pairs of distinct subspaces from the family are pairwise isoclinic.
}
\end{definition}

Any family of mutually orthogonal subspaces are isoclinic at angle $\frac{\pi}{2}$, and this case is often a distinguished special case in certain subclasses (we shall see such instances in the examples below). However, there are many more possibilities, as the following characterization indicates.  

\begin{theorem}\label{isoprop}
Let $\mathcal V$ and $\mathcal W$ be two $k$-dimensional subspaces of a Hilbert space $\mathcal{H}$, with $k \geq 1$. Let $P_{\mathcal{V}}$ and $P_{\mathcal{W}}$ denote the orthogonal projections of $\mathcal H$ onto the subspaces $\mathcal{V}$ and $\mathcal{W}$ respectively. Then the following statements are equivalent: 
\begin{itemize}
\item[$(i)$] $\mathcal V$ and $\mathcal W$ are isoclinic subspaces.
\item[$(ii)$] There exists $\lambda \geq 0$ such that
\begin{equation}\label{isoconds}
P_{\mathcal V}P_{\mathcal W}P_{\mathcal V} = \lambda P_{\mathcal V}   \quad \quad \mathrm{and} \quad \quad P_{\mathcal W}P_{\mathcal V}P_{\mathcal W} = \lambda P_{\mathcal W} .
\end{equation}
Here, $\lambda = \cos(\theta)$ where $\mathcal V$, $\mathcal W$ are isoclinic at angle $\theta$.
\end{itemize}
\end{theorem}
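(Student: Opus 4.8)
The plan is to reduce both directions to a single linear-algebraic fact about the positive semidefinite operator $T := P_{\mathcal V} P_{\mathcal W} P_{\mathcal V}$, whose spectrum encodes the canonical angles. First I would fix operators $A, B \colon \C^k \to \mathcal H$ whose columns are orthonormal bases of $\mathcal V$ and $\mathcal W$, so that $P_{\mathcal V} = A A^*$ and $P_{\mathcal W} = B B^*$, and set $M := A^* B \in M_k(\C)$. The key observation is that $T = A (M M^*) A^*$ is supported on $\mathcal V$ and that its compression to $\mathcal V$, expressed in the basis $A$, is exactly $A^* T A = M M^*$; hence the eigenvalues of $T|_{\mathcal V}$ are the squared singular values of $M$. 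By the Bjorck--Golub characterization recalled above, the singular values of $M = A^* B$ are precisely the cosines $\cos\theta_1, \dots, \cos\theta_k$ of the canonical angles between $\mathcal V$ and $\mathcal W$. This identifies the spectrum of $T|_{\mathcal V}$ with $\{\cos^2\theta_1, \dots, \cos^2\theta_k\}$.

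With this in hand the equivalence becomes transparent. For $(i) \Rightarrow (ii)$: if the subspaces are isoclinic at angle $\theta$, then all singular values of $M$ equal $\cos\theta$, so $M M^* = \cos^2\theta \, I_k$, and therefore $T = A(\cos^2\theta\, I_k)A^* = \cos^2\theta\, A A^* = \cos^2\theta\, P_{\mathcal V}$, giving the first identity in \eqref{isoconds} with $\lambda = \cos^2\theta \ge 0$. Since the canonical angles are symmetric in $\mathcal V$ and $\mathcal W$ (the singular values of $B^* A = M^*$ coincide with those of $M$), the same argument with the roles reversed yields $P_{\mathcal W} P_{\mathcal V} P_{\mathcal W} = \cos^2\theta\, P_{\mathcal W}$. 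For $(ii) \Rightarrow (i)$: compressing $P_{\mathcal V} P_{\mathcal W} P_{\mathcal V} = \lambda P_{\mathcal V}$ to $\mathcal V$ forces $M M^* = \lambda I_k$, so every singular value of $M$ equals $\sqrt{\lambda}$, i.e. every canonical angle equals $\arccos\sqrt{\lambda}$; the subspaces are thus isoclinic with $\lambda = \cos^2\theta$. Note that only one of the two identities in \eqref{isoconds} is needed for this direction.

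The step I expect to carry the real weight is the correspondence between the spectrum of $T|_{\mathcal V}$ and the canonical angles, i.e. justifying that the inductively defined angles of the Definition are exactly $\arccos$ of the singular values of $A^* B$. If one prefers not to invoke Bjorck--Golub as a black box, this must be argued directly: the cosine of the first angle is $\max\{ \abs{\langle x, y\rangle} : x \in \mathcal V,\, y \in \mathcal W,\, \norm{x} = \norm{y} = 1 \} = \norm{P_{\mathcal V} P_{\mathcal W}}$, the largest singular value of $M$, and the successive orthogonality constraints in the Definition match precisely the Courant--Fischer min--max characterization of the remaining singular values. The remaining points --- that $T$ is positive semidefinite so that $\lambda \ge 0$, that $T$ annihilates $\mathcal V^\perp$ so that a scalar multiple of $P_{\mathcal V}$ is forced rather than merely a scalar action on $\mathcal V$, and the $\mathcal V \leftrightarrow \mathcal W$ symmetry --- are then routine consequences of the orthonormal-basis bookkeeping set up at the start.
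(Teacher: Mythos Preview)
Your argument is correct and complete. Note, however, that the paper does not actually supply its own proof of this theorem: immediately after the statement it says the equivalence ``was noted without proof'' in the cited references and points to \cite{kribs2019isoclinic} for a combined proof of the various equivalent conditions. So there is no in-paper proof to compare your approach against. Your route via the singular values of $M = A^*B$ is the standard one and is in the same spirit as the Bjorck--Golub characterization the paper invokes in the preliminaries, so methodologically you are aligned with what the paper is implicitly relying on.

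One point worth flagging: your computation gives $\lambda = \cos^2\theta$, whereas the theorem as stated in the paper asserts $\lambda = \cos(\theta)$. Your identification is the correct one. Since $P_{\mathcal V}P_{\mathcal W}P_{\mathcal V} = (P_{\mathcal W}P_{\mathcal V})^*(P_{\mathcal W}P_{\mathcal V})$, its nonzero eigenvalues are the squared singular values of $P_{\mathcal W}P_{\mathcal V}$, i.e.\ the numbers $\cos^2\theta_i$; so when all angles equal $\theta$ one gets $\lambda = \cos^2\theta$. The paper's ``$\lambda = \cos(\theta)$'' is a slip. Your remark that only one of the two identities in \eqref{isoconds} is needed for $(ii)\Rightarrow(i)$ is also correct (and in fact, because $\dim\mathcal V = \dim\mathcal W$, each identity implies the other via the equality of singular values of $M$ and $M^*$).
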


This equivalency was noted without proof in \cite{hoggar1977new} and linked with other conditions around the same time \cite{hoggar1977new,wong1977linear}. This seems to be somewhat typical with the notion of isoclinic subspaces, in that it has arisen in so many places that its descriptions have become almost `folklore' type results. (See \cite{kribs2019isoclinic} for a recent combined proof of various equivalent conditions.)

\section{Knill--Laflamme Type Conditions for Isoclinic Subspaces} \label{sec:K-L}

In this section we will derive conditions for families of isoclinic subspaces that generalize the Knill--Laflamme conditions of quantum error correction. We break up the result into a pair of results that both apply more broadly. 

The Hilbert spaces $\mathcal H$ considered in the following results can be either finite or infinite dimensional, though the specific subclasses we discuss in the next section are all finite dimensional. Given a subspace $\mathcal C$, we will denote the projection of $\mathcal H$ onto the subspace by $\pc$. 

Let us recall some basic properties of partial isometries, which are essential operators in this discussion. An operator $V$ on $\mathcal H$ is a partial isometry if and only if $P = V^*V$, or equivalently $Q = V V^*$, is a (orthogonal) projection. In this case, $P$, respectively $Q$, is called the {\it initial}, respectively {\it final}, projection for $V$, and the following identities are satisfied: 
\[
V= VP = VV^* V = QV = QVP. 
\]
Note that $V$ is a partial isometry if and only if $V^*$ is a partial isometry, with the roles of the initial and final projections reversed. Of course, $V$ is unitary when both of these projections are the identity operator. More general partial isometries are fundamental operators in matrix and operator theory; in particular, we will make use of the polar decomposition of a generic operator $A$ on $\mathcal H$, which gives a partial isometry $V$ such that $A = V |A|$ where $|A| = \sqrt{|A^* A|}$. This partial isometry is not unique, for instance it can be taken to be unitary by extending its action to the whole Hilbert space by defining it unitarily on the orthogonal complement of the initial projection space mapping to a space orthogonal to the final projection space, which does not change the polar decomposition equation. We will make use of some other properties of the polar decomposition operators in the proofs below. 

We first identify generalized Knill--Laflamme conditions that imply the isoclinic subspace projection type equations (Eq.~(\ref{isoconds})) when they are satisfied. 

\begin{theorem}\label{KLcondition} 
Suppose that $\{A_i\}_i$ are operators on a Hilbert space $\mathcal H$ and $\mathcal C\subseteq \mathcal H$ is a subspace such that for all $i,j$,  
\begin{equation}\label{generalKLeqn}
P_{\mathcal C} A_i^* A_j P_{\mathcal C} = \lambda_{ij} U_{ij} P_{\mathcal C} = \lambda_{ij} P_{\mathcal C} U_{ij} ,     
\end{equation}
where $\lambda_{ij}\in \mathbb{C}$ and $U_{ij}$ are unitary operators that commute with $P_{\mathcal C}$. 
Then if $\{ P_i \}_i$ are the range projections of the operators $A_i \pc$, we have 
\begin{equation}\label{projectioneqn}
P_i P_j P_i = | \gamma_{ij} |^2 P_i \quad \quad \forall\, i,j, 
\end{equation}
where 
\begin{equation*}
\gamma_{ij} = \left\{ \begin{array}{cl} \frac{\lambda_{ij}}{\sqrt{\lambda_{ii}\lambda_{jj}}}  & \mathrm{if}\,\, \lambda_{ii} \neq 0 \,\, \mathrm{and} \,\, \lambda_{jj} \neq 0  \\ 0 & \mathrm{otherwise}       \end{array}   \right. 
\end{equation*}
Further, the projections $P_i$ and $P_j$ have the same rank as $\pc$ whenever $\gamma_{ij}\neq 0$. 
\end{theorem}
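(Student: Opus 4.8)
The plan is to realize each range projection $P_i$ through a partial isometry coming from the polar decomposition of $A_i\pc$, and then to reduce the triple product $P_iP_jP_i$ to the single ``cross term'' $V_i^*V_j$, which the hypothesis (\ref{generalKLeqn}) controls directly. Throughout I would abbreviate $P=\pc$.

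The first move is to extract information from the diagonal case $i=j$. Setting $i=j$ in (\ref{generalKLeqn}) gives $P A_i^* A_i P = \lambda_{ii}U_{ii}P$, whose left-hand side is positive semidefinite, being $(A_iP)^*(A_iP)$. The hard part of the whole argument is the next step: turning this into the clean scalar identity $P A_i^*A_i P = |\lambda_{ii}|P$. Since $U_{ii}$ commutes with $P$, it restricts to a unitary on $\mathcal C$, so $\lambda_{ii}U_{ii}|_{\mathcal C}$ is a scalar multiple of a unitary, hence normal, with spectrum on the circle of radius $|\lambda_{ii}|$. A normal operator is positive semidefinite only if its spectrum lies in $[0,\infty)$, and the only such point on that circle is $|\lambda_{ii}|$; thus $\lambda_{ii}U_{ii}|_{\mathcal C} = |\lambda_{ii}|I_{\mathcal C}$, that is, $\lambda_{ii}U_{ii}P = |\lambda_{ii}|P$. (The spectral radius formula for normal operators makes this valid even when $\mathcal H$ is infinite dimensional.)

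With $|A_iP| = \sqrt{P A_i^*A_i P} = \sqrt{|\lambda_{ii}|}\,P$ in hand (assuming $\lambda_{ii}\neq 0$), the polar decomposition $A_iP = V_i|A_iP|$ yields $A_iP = \sqrt{|\lambda_{ii}|}\,V_i$, where the partial isometry $V_i$ has initial projection exactly $P$ (the range projection of $\sqrt{|\lambda_{ii}|}P$) and final projection $P_i$. Hence $V_i^*V_i = P$, $V_iV_i^* = P_i$, and $V_iP = V_i$. Since $V_i$ restricts to an isometry of $\mathcal C$ onto the range of $P_i$, this step already delivers the rank claim: whenever $\gamma_{ij}\neq 0$ we have $\lambda_{ii},\lambda_{jj}\neq 0$, so $P_i$ and $P_j$ each have the same rank as $P$.

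For the main identity (when $\lambda_{ii},\lambda_{jj}\neq 0$) I would expand
\[
P_iP_jP_i = (V_iV_i^*)(V_jV_j^*)(V_iV_i^*) = V_i\,(V_i^*V_j)(V_j^*V_i)\,V_i^*
\]
and compute the cross term straight from the hypothesis,
\[
V_i^*V_j = \frac{1}{\sqrt{|\lambda_{ii}||\lambda_{jj}|}}\,P A_i^*A_j P = \frac{\lambda_{ij}}{\sqrt{|\lambda_{ii}||\lambda_{jj}|}}\,U_{ij}P .
\]
Taking the adjoint, multiplying, and using that $U_{ij}$ (hence $U_{ij}^*$) commutes with $P$ together with $U_{ij}U_{ij}^*=I$ collapses $(V_i^*V_j)(V_j^*V_i)$ to $|\gamma_{ij}|^2P$, since $|\lambda_{ij}|^2/(|\lambda_{ii}||\lambda_{jj}|) = |\gamma_{ij}|^2$; substituting back and using $V_iPV_i^* = V_iV_i^* = P_i$ gives (\ref{projectioneqn}). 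The degenerate case is immediate: if $\lambda_{ii}=0$ then $PA_i^*A_iP=0$ forces $A_iP=0$ and $P_i=0$, and likewise $\lambda_{jj}=0$ forces $P_j=0$, so both sides of (\ref{projectioneqn}) vanish, consistent with $\gamma_{ij}=0$. Apart from the positivity-to-scalar reduction flagged above, every remaining step is routine manipulation of partial isometries and the commutation relation for $U_{ij}$.
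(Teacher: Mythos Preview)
Your proof is correct and follows essentially the same route as the paper: normalize the diagonal case to get $PA_i^*A_iP=|\lambda_{ii}|P$, take the polar decomposition $A_iP=\sqrt{|\lambda_{ii}|}\,V_i$ with $V_i^*V_i=P$ and $V_iV_i^*=P_i$, compute the cross term $V_i^*V_j$ from the hypothesis, and conjugate by $V_i,\,V_i^*$ to obtain $P_iP_jP_i=|\gamma_{ij}|^2P_i$. Your spectral justification for the diagonal reduction is a bit more explicit than the paper's, and you expand $P_iP_jP_i$ directly rather than passing through the auxiliary operator $B_{ij}=PV_i^*V_jP$, but these are cosmetic differences rather than a genuinely different argument.
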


\begin{proof}
First note that if $\lambda_{ii}=0$ for some $i$, then it follows from Eq.~(\ref{generalKLeqn}) that $A_i \pc = 0 = \pc A_i^*$, and hence that $\lambda_{ij}=0$ for all $j$. Hence for such $i$ and all $j$, Eq.~(\ref{projectioneqn}) is trivially satisfied with $\gamma_{ij} = 0$.  

So for the rest of the proof we assume every $\lambda_{ii}\neq 0$. By Eq.~(\ref{generalKLeqn}), we have that $\lambda_{ii} U_{ii} P_{\mathcal C}$ is a positive operator, and, by dividing both sides of the equation by the positive part of $\lambda_{ii}$, we further observe that this operator must in fact be a positive scalar multiple of $P_{\mathcal C}$. Thus, without loss of generality, we can assume $\lambda_{ii}> 0$ and $U_{ii}=I$ is the identity operator. We can then take the polar decomposition 
\[
A_i \pc = V_i \sqrt{P_{\mathcal C} A_i^* A_i P_{\mathcal C}} = \sqrt{\lambda_{ii}} V_i \pc , 
\]
where $V_i$ is a partial isometry with initial projection $\pc$ and final projection $P_i$. In particular, these operators satisfy the following identities: 
\[
V_i^* V_i = \pc, \quad\quad  V_i V_i^* = P_i, \quad\quad  V_i \pc = V_i, \quad\quad  V_i \pc V_i^* = P_i . 
\]

Thus we have for all such $i,j$, 
\[
\lambda_{ij} U_{ij} P_{\mathcal C} = P_{\mathcal C} A_i^* A_j P_{\mathcal C} = (\sqrt{\lambda_{ii}} \pc V_i^*) (\sqrt{\lambda_{jj}} V_j \pc) = \sqrt{\lambda_{ii}\lambda_{jj}} P_{\mathcal C} V_i^* V_j P_{\mathcal C} , 
\]
from which it follows that 
\[
P_{\mathcal C} V_i^* V_j P_{\mathcal C} = \big( \lambda_{ij}\big(\sqrt{\lambda_{ii}\lambda_{jj}}\big)^{-1} \big) U_{ij} \pc.
\]
If we let this last operator be equal to $B_{ij}$ and put $\gamma_{ij} = \lambda_{ij}(\sqrt{\lambda_{ii}\lambda_{jj}})^{-1}$, then using the fact that $U_{ij}$ is unitary and commutes with $\pc$ we have 
\[
B_{ij} B_{ij}^* = |\gamma_{ij}|^2 (U_{ij} \pc) (U_{ij} \pc)^* = |\gamma_{ij}|^2 \pc. 
\]
On the other hand, we also have 
\begin{align*}
B_{ij} B_{ij}^* = |\gamma_{ij}|^2 \pc &= (P_{\mathcal C} V_i^* V_j P_{\mathcal C}) (P_{\mathcal C} V_j^* V_i P_{\mathcal C}) \\ 
&= \pc V_i^* V_j V_j^* V_i \pc \\ &= \pc V_i^* P_j V_i \pc,  
\end{align*}
where here we have used the algebraic relations between the operators $V_i$, $P_i$ and $\pc$. 
Now multiply the left-side of this equation by $V_i$ and the right-side by $V_i^*$ to obtain the isoclinic equation: 
\[
|\gamma_{ij}|^2 P_i = |\gamma_{ij}|^2 V_i \pc V_i^* = (V_i \pc V_i^* ) P_j  (V_i \pc V_i^*) = P_i P_j P_i.
\]
Finally, note from the argument above that a scalar $\lambda_{ij}$ in Eq.~(\ref{generalKLeqn}) is non-zero if and only if both $\lambda_{ii}$ and $\lambda_{jj}$ are non-zero, and that in such cases the projections $P_i$ and $P_j$ have the same rank as $\pc$ with the partial isometries $V_i$ and $V_j$ intertwining the corresponding subspaces with $\mathcal C$.  
\end{proof}

Note that we cannot conclude in Theorem~\ref{KLcondition} that the family of subspaces are isoclinic only with the satisfaction of Eq.~(\ref{projectioneqn}). While one can show that projections satisfying these equations with non-zero scalars $\gamma_{ij}$ must have the same rank, one can simply have orthogonal subspaces (of any dimension) that trivially satisfy the relations with $\gamma_{ij} =0$. (It could also be added that the orthogonal subspace case is the most mathematically uninteresting in the context of the general isoclinic subspace relations, but it must be allowed for.)

The next result moves us in the converse direction of the previous result. We recall that a partial isometry is nothing more than a unitary operator restricted to a subspace, and hence the following result could equally be phrased with hypotheses given by a family of unitary operators and a subspace, with the projections given by the ranges of the restricted unitaries. 

\begin{lemma}\label{pisom} 
Suppose that $\{V_i\}_i$ are partial isometries on a Hilbert space $\mathcal H$ with common initial space $\mathcal C\subseteq \mathcal H$, and that their final projections $P_i = V_i V_i^*$ satisfy
the following equations for some scalars $\lambda_{ij} \geq 0$: 
\begin{equation}\label{pisomisoeqn}
P_i P_j P_i = \lambda_{ij} P_i \quad \quad \forall\, i,j. 
\end{equation}
Then there are unitary operators $U_{ij}$ that commute with $P_{\mathcal C}$ such that for all $i,j$, 
\begin{equation}\label{generalKLpisomeqn}
P_{\mathcal C} V_i^* V_j P_{\mathcal C} = \gamma_{ij} U_{ij} P_{\mathcal C} = \gamma_{ij} P_{\mathcal C} U_{ij},      
\end{equation} 
where $\gamma_{ij} = \sqrt{\lambda_{ij} }$. 
\end{lemma}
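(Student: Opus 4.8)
The plan is to run the computation from the proof of Theorem~\ref{KLcondition} in reverse. I would first record the standard partial isometry identities for each $V_i$, namely $V_i^* V_i = \pc$, $V_i V_i^* = P_i$, and $V_i \pc = V_i = P_i V_i$ (together with their adjoints). Using these, the compressed product collapses to $B_{ij} := \pc V_i^* V_j \pc = V_i^* V_j$, an operator whose support and range both lie in $\mathcal C$. The goal is then to show that $\abs{B_{ij}} = \gamma_{ij}\pc$ and to produce the unitary $U_{ij}$ from a polar decomposition of $B_{ij}$.

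For the modulus computation, I would form $B_{ij} B_{ij}^* = V_i^* V_j V_j^* V_i = V_i^* P_j V_i$ and then compress the hypothesis $P_i P_j P_i = \lambda_{ij} P_i$ on the left by $V_i^*$ and on the right by $V_i$; using $V_i^* P_i = V_i^*$, $P_i V_i = V_i$, and $V_i^* V_i = \pc$, this yields $B_{ij} B_{ij}^* = \lambda_{ij}\pc$. The symmetric computation gives $B_{ij}^* B_{ij} = V_j^* P_i V_j = \lambda_{ji}\pc$. Since $\norm{B_{ij} B_{ij}^*} = \norm{B_{ij}^* B_{ij}} = \norm{B_{ij}}^2$ and $\norm{\pc} = 1$, comparing norms forces $\lambda_{ij} = \lambda_{ji}$, so that $\abs{B_{ij}} = \sqrt{B_{ij}^* B_{ij}} = \sqrt{\lambda_{ij}}\,\pc = \gamma_{ij}\pc$. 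I would phrase this symmetry step via operator norms rather than traces precisely so that it remains valid in the infinite dimensional setting.

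With the modulus in hand, I would take the polar decomposition $B_{ij} = W_{ij}\abs{B_{ij}} = \gamma_{ij} W_{ij}\pc$, where $W_{ij}$ is a partial isometry. When $\gamma_{ij} \neq 0$, the initial projection of $W_{ij}$ is the range projection of $\abs{B_{ij}} = \gamma_{ij}\pc$, namely $\pc$, while its final projection is the range projection of $B_{ij}$, which equals $\pc$ because $B_{ij} B_{ij}^* = \lambda_{ij}\pc$ has range $\mathcal C$. Thus $W_{ij}$ is a unitary on $\mathcal C$ that annihilates $\mathcal C^\perp$, and I would extend it to the unitary $U_{ij} := W_{ij} + (I - \pc)$ on all of $\mathcal H$. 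This $U_{ij}$ preserves both $\mathcal C$ and $\mathcal C^\perp$, hence commutes with $\pc$, and satisfies $\gamma_{ij} U_{ij}\pc = \gamma_{ij} W_{ij}\pc = B_{ij}$; since $U_{ij}$ commutes with $\pc$, the two displayed forms in Eq.~(\ref{generalKLpisomeqn}) coincide automatically. The degenerate case $\gamma_{ij} = 0$ forces $B_{ij} = 0$, and any unitary commuting with $\pc$, for instance $U_{ij} = I$, satisfies the equation trivially.

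The main obstacle I anticipate is not the algebra but the two structural points that make the conclusion work: establishing the symmetry $\lambda_{ij} = \lambda_{ji}$ so that $\gamma_{ij} = \sqrt{\lambda_{ij}}$ genuinely records the modulus of $B_{ij}$, and verifying that the polar-decomposition partial isometry $W_{ij}$ has both its initial and final spaces equal to $\mathcal C$. It is exactly this coincidence of initial and final spaces that permits extension by the identity on $\mathcal C^\perp$ to a unitary commuting with $\pc$, as the statement demands.
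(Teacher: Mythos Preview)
Your argument is correct and follows the same skeleton as the paper: compress the relation $P_iP_jP_i=\lambda_{ij}P_i$ by $V_i^*$ and $V_i$ to identify $B_{ij}^*B_{ij}$ and $B_{ij}B_{ij}^*$ as scalar multiples of $\pc$, then read off the unitary from the polar decomposition. Two technical choices differ. First, you establish the symmetry $\lambda_{ij}=\lambda_{ji}$ via operator norms, whereas the paper uses the trace identity $\lambda_{ij}\tr(P_i)=\tr(P_iP_j)=\lambda_{ji}\tr(P_j)$; your version is cleaner in the infinite-dimensional setting the paper explicitly allows, where $\tr(P_i)$ may be infinite. Second, to obtain a unitary commuting with $\pc$, you extend the polar partial isometry $W_{ij}$ by hand as $U_{ij}=W_{ij}+(I-\pc)$, while the paper argues that the polar unitary lies in the von Neumann algebra $\mathrm{W}^*(B_{ij})\subseteq\{\pc\}'$. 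Both routes are valid; yours is more elementary and self-contained, the paper's is slicker if one is comfortable invoking that the polar factor always lies in $\mathrm{W}^*(B_{ij})$.
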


\begin{proof}
First note that $\lambda_{ij} = \lambda_{ji}$ for all $i,j$, since the projections $P_i$, $P_j$ have the same rank (as images of partial isometries with the same initial projection) and Eq.~(\ref{pisomisoeqn}) gives us via the trace, 
\[
\lambda_{ij} \tr(P_i) = \tr(P_i P_j P_i) = \tr (P_i P_j) = \tr(P_jP_iP_j) = \lambda_{ji} \tr(P_j). 
\]
We have $\pc = V_i^* V_i$ for all $i$ and $P_i = V_i V_i^*$. Hence expanding Eq.~(\ref{pisomisoeqn}) yields 
\[
(V_i V_i^*)( V_j V_j^*) (V_i V_i^*) = \lambda_{ij} V_i V_i^*.
\]
We can then multiply both sides of this equation on the left-side by $V_i^*$ and the right-side by $V_i$, and insert $\pc^2$ in the middle of the left-side of the equation (using $V_j\pc=V_j$ and its adjoint equation), to obtain, 
\[
(\pc V_j^* V_i \pc)^* (\pc V_j^* V_i \pc ) = \lambda_{ij} \pc \pc = \lambda_{ij} \pc. 
\]

In the case that $\lambda_{ij}=0 = \lambda_{ji}$ for some pair $i,j$, it follows that $\pc V_j^* V_i \pc  = 0 = 0\pc$ (and the same is true for the adjoint operator with $i,j$ roles reversed), which gives Eq.~(\ref{generalKLpisomeqn}) trivially with $U_{ij} = I$.  

So suppose that $0\leq \lambda_{ij} = \lambda_{ji}\neq 0$ and put $\gamma_{ij} = \sqrt{\lambda_{ij}}$. The above equation thus gives us $\pc = B_{ij}^* B_{ij}$ where we define 
\[
B_{ij} = \gamma_{ij}^{-1} \pc V_j^* V_i \pc. 
\]
Reversing the roles of $i$ and $j$ and using $\gamma_{ij} = \gamma_{ji}$, we also have $\pc = B_{ij} B_{ij}^*$. It follows that $B_{ij}$ is a partial isometry with $\pc$ as both its initial and final projection, from which we have $B_{ij} = B_{ij} \pc = \pc B_{ij}$ and $B_{ij}^* = \pc B_{ij}^* = B_{ij}^* \pc$. 

In particular, this implies that the von Neumann algebra $\mathrm{W}^*(B_{ij})$ generated by $B_{ij}$ (which, in the finite-dimensional case, is the set of polynomials in $B_{ij}$ and $B_{ij}^*$) is contained in the commutant $\{ \pc \}^\prime$ of the projection $\pc$.  
When we examine the polar decomposition 
\[
B_{ij} = U_{ij} \sqrt{B_{ij}^* B_{ij}} = U_{ij} \pc, 
\]
we obtain a unitary $U_{ij} \in \mathrm{W}^*(B_{ij})$ (which is always the case for a unitary in the polar decomposition of an operator \cite{davidson1996c}) and hence $U_{ij}\in\{ \pc \}^\prime$.

We have thus shown that for all $i,j$ with $\lambda_{ij}\neq 0$, there is a unitary $U_{ij}$ that commutes with $\pc$ such that $\pc V_j^* V_i \pc = \sqrt{\lambda_{ij}} U_{ij} \pc$, and this completes the proof. 
\end{proof}

Combining the previous two results in the cases of most relevance to families of isoclinic subspaces gives us the following result. 

\begin{theorem}\label{KLisoclinic} 
Suppose that $\{A_i\}_i$ are operators on a Hilbert space $\mathcal H$ that are scalar multiples of partial isometries with common initial subspace $\mathcal C\subseteq \mathcal H$. Then the range spaces of $A_i \pc$ form an isoclinic family of subspaces if and only if there are scalars $\lambda_{ij}\in \mathbb{C}$ and unitary operators $U_{ij}$ that commute with $P_{\mathcal C}$ such that  
\begin{equation*}
P_{\mathcal C} A_i^* A_j P_{\mathcal C} = \lambda_{ij} U_{ij} P_{\mathcal C} = \lambda_{ij} P_{\mathcal C} U_{ij} \quad \quad \forall\, i,j.      
\end{equation*}
\end{theorem}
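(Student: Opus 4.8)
The plan is to prove both directions of the equivalence by assembling the two results already established, using the projection-equation characterization of isoclinic subspaces from Theorem~\ref{isoprop} as the bridge. First I would fix notation: since each $A_i$ is a scalar multiple of a partial isometry with common initial space $\mathcal C$, write $A_i = \mu_i W_i$ with $\mu_i \in \mathbb{C}$ and $W_i^* W_i = \pc$. Then $A_i \pc = \mu_i W_i$, so the range of $A_i\pc$ coincides with the range of the final projection $P_i = W_i W_i^*$, and because every $P_i$ is the final projection of a partial isometry with the same initial projection $\pc$, all the $P_i$ have the same rank as $\pc$; hence the candidate family consists of equidimensional subspaces, as the definition of an isoclinic family requires.

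For the ($\Leftarrow$) direction, I would assume the generalized Knill--Laflamme conditions hold, so that Theorem~\ref{KLcondition} applies verbatim and yields $P_i P_j P_i = |\gamma_{ij}|^2 P_i$ for all $i,j$. To invoke the symmetric characterization in Theorem~\ref{isoprop} I need the companion equation $P_j P_i P_j = |\gamma_{ij}|^2 P_j$ with the \emph{same} nonnegative scalar. This follows once I check $|\gamma_{ij}| = |\gamma_{ji}|$: taking adjoints of $\pc A_i^* A_j \pc = \lambda_{ij} U_{ij}\pc$ gives $\pc A_j^* A_i \pc = \bar\lambda_{ij} U_{ij}^* \pc$, and comparing this with $\pc A_j^* A_i \pc = \lambda_{ji} U_{ji}\pc$ (where $U_{ij}^*$ and $U_{ji}$ are both unitaries commuting with $\pc$) forces $|\lambda_{ij}| = |\lambda_{ji}|$; since $\lambda_{ii},\lambda_{jj} > 0$ this gives $|\gamma_{ij}| = |\gamma_{ji}|$. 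Thus both projection equations hold with common scalar $|\gamma_{ij}|^2$, and Theorem~\ref{isoprop} declares the ranges isoclinic.

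For the ($\Rightarrow$) direction, I would assume the ranges of the $A_i\pc$ form an isoclinic family. By Theorem~\ref{isoprop} there are scalars $\lambda_{ij}' \geq 0$ with $P_i P_j P_i = \lambda_{ij}' P_i$ for all $i,j$, which are exactly the hypotheses of Lemma~\ref{pisom} for the partial isometries $\{W_i\}$. Applying that lemma produces unitaries $U_{ij}$ commuting with $\pc$ and satisfying $\pc W_i^* W_j \pc = \sqrt{\lambda_{ij}'}\, U_{ij}\pc = \sqrt{\lambda_{ij}'}\, \pc U_{ij}$. Reinserting the scalars via $A_i = \mu_i W_i$ then gives $\pc A_i^* A_j \pc = \bar\mu_i\mu_j\, \pc W_i^* W_j \pc = \lambda_{ij} U_{ij}\pc = \lambda_{ij}\pc U_{ij}$ with $\lambda_{ij} = \bar\mu_i\mu_j\sqrt{\lambda_{ij}'}$, which is the desired generalized Knill--Laflamme form.

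Since the substance is contained in the two preceding results, I do not expect a deep obstacle; the work is bookkeeping. The point requiring the most care is the symmetry $|\gamma_{ij}| = |\gamma_{ji}|$ in the ($\Leftarrow$) direction, needed to match the single scalar appearing in both equations of Theorem~\ref{isoprop}, and this is precisely where the hypothesis that the $U_{ij}$ are unitaries commuting with $\pc$ (rather than arbitrary operators) is essential. One should also dispose of the degenerate cases---an operator $A_i = 0$, or mutually orthogonal ranges realized with $\gamma_{ij} = 0$---which are handled trivially by the conventions already built into Theorem~\ref{KLcondition} and Lemma~\ref{pisom}.
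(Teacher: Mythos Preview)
Your proposal is correct and follows essentially the same approach as the paper: both directions are obtained by combining Theorem~\ref{KLcondition} and Lemma~\ref{pisom} with the projection characterization of Theorem~\ref{isoprop}, and the scalar-multiple hypothesis is used exactly to guarantee that all range projections $P_i$ share the rank of $\pc$ even when some $\gamma_{ij}$ vanish. The only minor difference is in how the symmetry $|\gamma_{ij}| = |\gamma_{ji}|$ (needed so that both equations of Theorem~\ref{isoprop} hold with a common scalar) is established: the paper appeals to the trace identity $\tr(P_iP_j) = \tr(P_jP_i)$ together with equal ranks, whereas you take adjoints in the Knill--Laflamme relation and compare norms; both arguments are valid and equally short.
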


\begin{proof} 
By Theorem~\ref{KLcondition}, if Eq.~(\ref{generalKLeqn}) are satisfied, then so are the projection identities given in Eq.~(\ref{projectioneqn}). Whenever the scalar in one of the projection equations is non-zero, the trace argument used in the proof above shows that the corresponding projections have the same rank. The fact that the operators $A_i$ are multiples of partial isometries with the same initial space comes into this direction of the argument just to ensure that, even when the scalars are zero, the range projections have the same rank. Since the range projections have the same rank and satisfy Eq.~(\ref{projectioneqn}), any pair of these range projections satisfy both Eq.~(\ref{isoconds})  and the hypotheses of Theorem \ref{isoprop}.  Therefore, it follows from Theorem \ref{isoprop} that the range spaces form an isoclinic family. 

The other direction of the proof, when the range spaces of the $A_i \pc$ are assumed to be isoclinic, is captured by the special case of Lemma~\ref{pisom} (with Theorem~\ref{isoprop} applied to obtain the projection equations) in which all the projections have the same rank (i.e., even when the scalars in the equation are zero). The scalars that define the $A_i$ as multiples of partial isometries can be absorbed into the scalars $\gamma_{ij}$ of Eq.~(\ref{generalKLpisomeqn}). 
\end{proof} 

\begin{remark}
While this result includes hypotheses on the operators considered, in particular a common support space, we note that it can be applied to every family of isoclinic subspaces in the following way. Given a family $\{ \mathcal V_i \}_{i\geq 1}$ of $k$-dimensional isoclinic subspaces with associated rank-$k$ projections $P_i$, we can pick any $k$-dimensional subspace $\mathcal C = \mathcal V_0$ of $\mathcal H$ with projection $\pc$, and define intertwining partial isometries $V_i$ with final projections $P_i = V_i V_i^*$ and all with the initial projection $\pc = V_i^* V_i$. One can then apply Theorem~\ref{KLisoclinic} with $A_i = P_i$, to obtain Eq.~(\ref{generalKLeqn}) and unitary operators $U_{ij}$. 
As it turns out, in many cases of interest, including for the subclasses considered in the next section, there is a natural `base' or `anchor' subspace for the isoclinic family as in this discussion. 
\end{remark} 

\section{Applications and Examples}\label{sec:Apps}

\subsection{Quantum Error Correcting Codes}

The subclass of isoclinic subspace families defined through quantum error correction come from the actions of sets of error operators $\{ E_i \}$ on a code subspace $\mathcal C \subseteq \mathcal H$ that can be corrected after their actions. Formally, the operators are generally assumed to define a completely positive and trace-preserving (i.e., $\sum_i E_i^* E_i = I$) map given by $\mathcal E(\rho) = \sum_i E_i \rho E_i^*$. The code $\mathcal C$ is then correctable for $\mathcal E$ if there is a completely positive trace-preserving map $\mathcal R$ on $\mathcal H$ such that $(\mathcal R \circ \mathcal E)(\rho) = \rho$ for all operators $\rho$ on $\mathcal H$ that are supported on $\mathcal C$.

The Knill--Laflamme Theorem \cite{knill1997knill} gives testable conditions for a set of error operators to be correctable for a given code, and it is the special case of the isoclinic conditions above with the unitary operators all equal to the identity operator. In other words, $\mathcal C$ is correctable for $\mathcal E$ if and only if there exist scalars $\lambda_{ij}\in \mathbb{C}$ such that for all $i,j$,
\begin{equation}\label{klconditions}
P_{\mathcal C} E_i^* E_j P_{\mathcal C} = \lambda_{ij} P_{\mathcal C}. 
\end{equation}
In this case, the scalars $\Lambda = ( \lambda_{ij} )$ form a positive and trace one (i.e., density) matrix. The recovery operation $\mathcal R$ is then constructed by using the partial isometries obtained in the polar decomposition of the $E_i$ and factoring through the scalar unitary that diagonalizes $\Lambda$ through these equations. See \cite{kribs2005quantum, nielsen2002quantum} for a more comprehensive introduction to quantum error correction. 

It was shown in \cite{kribs2019isoclinic} that for any code subspace $\mathcal C$ and set of (non-degenerate) error operators $\{ E_i \}$, the range subspaces of the restrictions of the $E_i$ to $\mathcal C$ form a family of isoclinic subspaces. As noted above, we can now see this as a distinguished special case of the general isoclinic conditions. There is a partial converse to this result: any pair of isoclinic subspaces can naturally be viewed as arising from a quantum error correcting code with two error operators (essentially the small number of subspaces limits the number of relevant equations and allows for this equivalence). 
More generally, however, the isoclinic subspace conditions of Eq.~(\ref{generalKLeqn}) define a broader class. It is natural to ask though, what, if anything, do these more general equations describe in the context of sets of operators and (code) subspaces that are of relevance in quantum information? To this end, let us consider a motivating class of codes from quantum error correction. 

Let $X,Y,Z$ be the usual (single qubit) Pauli operators on $\mathbb{C}^2$ with orthonormal basis $\{\ket{0}, \ket{1}\}$ \cite{nielsen2002quantum}. Given a positive integer $n\geq1$, we can consider the $n$-tensors of these operators given by products of the operators $X_1 = X \otimes I \otimes I \otimes \ldots$, $Z_2 = I \otimes Z \otimes I \otimes \ldots$, etc, that act on $n$-qubit Hilbert space $(\mathbb{C}^2)^{\otimes n}$, which has orthonormal basis vectors $\ket{i_1\cdots i_n} = \ket{i_1} \otimes \ldots \otimes \ket{i_n}$ given by tensor products of the basis vectors from its single qubit subsystems. The $n$-qubit Pauli group $\mathcal P_n$ is the subgroup of unitary operators generated by the $X_i, Z_i$ and $iI$.  

The stabilizer formalism for quantum error correction invented by Gottesman \cite{gottesman1996d} shows how to build correctable codes for sets of Pauli error operators, and applies the Knill--Laflamme Theorem to give a complete group-theoretic characterization of which error operators are correctable for a given `stabilizer code'. The starting point for the formalism is an Abelian subgroup $\mathcal S$ of $\mathcal P_n$ that does not contain $-I$; for this discussion, let us take $\mathcal S = \langle Z_1, \ldots , Z_{n-k} \rangle$ for some fixed $k \in \{ 1,2, \ldots , n\}$. The stabilizer subspace for $\mathcal S$, which is designated as the code space, is $\mathcal C = \mathcal C(\mathcal S) = \mathrm{span} \{ \ket{\psi} \, : \, S \ket{\psi} = \ket{\psi} \,\, \forall\, S \in \mathcal S\}$. In the example, we get the $k$-qubit subspace $\mathcal C = \mathrm{span} \{ \ket{0}^{\otimes(n-k)} \ket{i_1 \cdots i_k} \, : \, i_j\in\{0,1\} \}$. 

One can check that the normalizer subgroup $\mathcal N(\mathcal S)$ of $\mathcal S$ inside $\mathcal P_n$ coincides with its centralizer $\mathcal Z(\mathcal S)$, as every element of $\mathcal P_n$ either commutes or anti-commutes and $-I \notin \mathcal S$. A main result in the stablizer formalism asserts that a stabilizer code $\mathcal C(\mathcal S)$ is correctable for a set of Pauli error operators $\{E_i\}$ exactly when all the operator products $E_i^* E_j$ satisfy the constraint: $E_i^* E_j \notin \mathcal N(\mathcal S) \setminus \langle \mathcal S, iI \rangle$. In the example, one can see directly that $\mathcal N(\mathcal S) = \mathcal Z(\mathcal S)$ is the group generated by $\langle \mathcal S, iI \rangle$ and the operators $\mathcal L = \langle X_i, Z_i \, : \, n-k+1 \leq i \leq n \rangle$. 

The operators in $\mathcal L$ are called {\it logical operators}, as they can be used to implement logical operations on the code space. (We note that while this is a somewhat canonical choice of logical operators and convenient for our discussion, it is not unique.) They form a multiplicatively closed set (as a group), that clearly belongs to $\mathcal N(\mathcal S) \setminus \langle \mathcal S, iI \rangle$, and hence they are not correctable by the result noted above (in particular they do not satisfy the Knill--Laflamme conditions). However, observe they do satisfy the more general isoclinic equations. Indeed, given $L\in\mathcal L$, we have $L\in \mathcal Z(\mathcal S)$, and so $L$ commutes with all the spectral projections of every element of $\mathcal S$, which in turn implies that $L$ commutes with $\pc$ (this can be seen from the explicit form for $\pc$ noted below). Hence $L\pc = \pc L$ and $\mathcal C$ is a reducing subspace for all $L\in \mathcal L$; i.e., $\mathcal C$ is an invariant subspace for both $L$ and $L^*$. Given any pair $L_1,L_2\in \mathcal L$, we have $L= L_1L_2\in\mathcal L$ and thus $\pc L_1 L_2 \pc = L \pc = \pc L$. So by Theorem~\ref{KLcondition}, it follows that the set of operators $\mathcal L$ (or any subset of them) together with $\mathcal C$ define an isoclinic family of subspaces given by $\{ L\mathcal C : L\in \mathcal L\}$, the ranges of the operators $L$ restricted to $\mathcal C$. 

Now, recall from the definition of $\mathcal C = \mathcal C(\mathcal S)$ that every $S\in \mathcal S$ satisfies $S\pc = \pc = \pc S$. More generally, one can show that an operator $L\in\mathcal P_n$ commutes with $\pc$ if and only if $L\in \mathcal N(\mathcal S)$. Indeed, the eigenvalue 1 eigenspace projection for $S\in \mathcal S$ is $P_S = \frac12 (I+S)$, and $\pc$ is the product of all the $P_S$. If $L\in\mathcal P_n$ does not belong to $\mathcal N(\mathcal S)= \mathcal Z(\mathcal S)$, then it anti-commutes with some $S\in\mathcal S$, and we have $LP_S = \frac12 (I-S)L$, which on the right side of this equation is the eigenvalue -1 eigenspace projection for $S$ multiplied by $L$ on the right. It follows that such an $L$ cannot commute with $\pc$. Thus, it also follows that the set of Pauli group operators that satisfy the general isoclinic equations with a {\it non-trivial} unitary for a given stabilizer code $\mathcal C(\mathcal S)$, are {\it precisely} the operators that belong to $\mathcal N(\mathcal S) \setminus \langle \mathcal S, iI \rangle$. 

Combining this observation with the results of the last section and some basic properties of correctable errors for stabilizer codes, allows us to conclude the following.  

\begin{proposition}\label{qecisoclinic}
Let $\mathcal C = \mathcal C(\mathcal S)$ be an $n$-qubit stabilizer code and let $\{E_i\}_i \subseteq \mathcal P_n$ be any subset of Pauli operators. Then the range subspaces of the operators $E_i \pc$ are isoclinic. Moreover, the products amongst the set $\{ E_i^* E_j \}_{i,j}$ that belong to $\mathcal N(\mathcal S) \setminus \langle \mathcal S, iI \rangle$ are precisely those that satisfy Eq.(\ref{generalKLeqn}) with non-trivial unitary operators. 
\end{proposition}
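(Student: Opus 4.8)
The plan is to reduce the whole statement to an analysis of the single Pauli operator $G_{ij} := E_i^* E_j$, which again lies in $\mathcal P_n$ since the Pauli group is closed under products and adjoints, and then to feed the outcome into Theorem~\ref{KLcondition} and Theorem~\ref{isoprop}. First I would record that each $E_i$ is unitary, so that every range projection $P_i = E_i \pc E_i^*$ is unitarily equivalent to $\pc$ and in particular has the same rank as $\pc$; this supplies the rank bookkeeping that Theorem~\ref{KLisoclinic} requires and removes the only obstruction to passing from the projection identities of Eq.~(\ref{projectioneqn}) to genuine isoclinicity.

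The core of the argument is a trichotomy for $G_{ij}$ relative to $\mathcal S$, exactly tracking the three cases used in the discussion preceding the statement. First, if $G_{ij}\notin\mathcal N(\mathcal S)$, then it anti-commutes with some $S\in\mathcal S$; since $\mathcal C$ lies in the $+1$ eigenspace of $S$, for $v\in\mathcal C$ we get $S(G_{ij}v)=-G_{ij}Sv=-G_{ij}v$, so $G_{ij}v$ lies in the $-1$ eigenspace of $S$ and hence $G_{ij}v\perp\mathcal C$, giving $\pc G_{ij}\pc=0$. Second, if $G_{ij}\in\langle\mathcal S, iI\rangle$, then $G_{ij}\pc = c\,\pc$ for a phase $c$, so $\pc G_{ij}\pc = c\,\pc$. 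Third, if $G_{ij}\in\mathcal N(\mathcal S)\setminus\langle\mathcal S, iI\rangle$, then by the fact established just above the statement that a Pauli operator commutes with $\pc$ precisely when it lies in $\mathcal N(\mathcal S)$, we have $\pc G_{ij}\pc = G_{ij}\pc = \pc G_{ij}$, a unitary (namely $G_{ij}$ itself) times $\pc$. In every case Eq.~(\ref{generalKLeqn}) holds with $A_i=E_i$: the first two cases give the Knill--Laflamme form with $U_{ij}=I$ (and $\lambda_{ij}=0$ or $\lambda_{ij}=c$ respectively), while the third gives it with a genuinely non-scalar unitary on $\mathcal C$.

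With Eq.~(\ref{generalKLeqn}) in hand for all $i,j$, Theorem~\ref{KLcondition} yields the projection identities $P_iP_jP_i=|\gamma_{ij}|^2P_i$; combined with the equal-rank observation of the first step, each pair $P_i, P_j$ then satisfies both relations of Eq.~(\ref{isoconds}) and the hypotheses of Theorem~\ref{isoprop}, so the subspaces $E_i\mathcal C$ are pairwise isoclinic, which proves the first assertion. For the second assertion I would read off the ``precisely'' claim from the trichotomy once the meaning of a non-trivial unitary is fixed, namely that $U_{ij}\pc$ is not a scalar multiple of $\pc$, equivalently that $\pc G_{ij}\pc$ is not a scalar multiple of $\pc$. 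The first two cases force $\pc G_{ij}\pc$ to be a scalar multiple of $\pc$ (either $0$ or $c\,\pc$), hence trivial, while the third produces a non-scalar $U_{ij}$; conversely a non-trivial unitary paired with a nonzero scalar rules out case one (where $\lambda_{ij}=0$) and case two (where $U_{ij}$ is forced to act as $I$ on $\mathcal C$), leaving exactly $G_{ij}\in\mathcal N(\mathcal S)\setminus\langle\mathcal S,iI\rangle$.

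The main obstacle I anticipate is not any single computation but rather pinning down the intended meaning of ``satisfies Eq.~(\ref{generalKLeqn}) with a non-trivial unitary,'' since the factorization $\pc G_{ij}\pc=\lambda_{ij}U_{ij}\pc$ is not unique and phases and scalars can be shuffled between $\lambda_{ij}$ and $U_{ij}$. I would therefore phrase non-triviality intrinsically as the assertion that $\pc G_{ij}\pc$ is not a scalar multiple of $\pc$, a condition manifestly independent of the factorization, and then verify that this intrinsic condition is equivalent to membership in $\mathcal N(\mathcal S)\setminus\langle\mathcal S,iI\rangle$ via the commutation characterization already proved.
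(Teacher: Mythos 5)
Your proposal is correct and follows essentially the same route as the paper's proof: the same trichotomy for $E_i^*E_j$ (anti-commuting with some element of $\mathcal S$, lying in $\langle \mathcal S, iI\rangle$, or logical), the same reliance on the commutation characterization of $\mathcal N(\mathcal S)$ established just before the statement, and the same appeal to Theorem~\ref{KLcondition} together with Theorem~\ref{isoprop} to get isoclinicity. The differences are only in level of detail, and in the paper's favorable direction: you spell out the ``standard stabilizer formalism argument'' for the orthogonal case via the $\pm 1$ eigenspaces, retain the phase $c$ that the paper suppresses when $E_i^*E_j \in \langle \mathcal S, iI\rangle$, and give an intrinsic, factorization-independent formulation of ``non-trivial unitary,'' all consistent with the paper's intended argument.
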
 

\begin{proof} 
In fact we can say precisely how Eq.~(\ref{generalKLeqn}) are satisfied here. Indeed, the argument above shows that any operator product $E_i^* E_j := E$ that belongs to the (logical) operator set $\mathcal N(\mathcal S) \setminus \langle \mathcal S, iI \rangle$, must satisfy these equations with $E$ acting as the non-trivial unitary. (The scalar multiple $\lambda_{ij}$ will be modulus one, as we have assumed the operators belong to $\mathcal P_n$, but we note that in general the error operators will be scalar multiples of such operators, so the error model will be trace-preserving.) The remaining cases correspond to correctable errors. Indeed, If $E$ belongs to $\langle \mathcal S, iI \rangle$, then clearly $E\pc = \pc$ and the condition is trivially satisfied. Further, if $E \notin \mathcal N(\mathcal S)$, then a standard stabilizer formalism argument can be used (using the explicit formula for the projection $\pc$ noted above) to show that $E_i\pc$ and $E_j\pc$ have orthogonal ranges, and hence the isoclinic conditions are satisfied with $\lambda_{ij}=0$.  
\end{proof}

\begin{remark}
This suggests an interesting possibility and line of investigation. We can ask if these isoclinic subspace results for stabilizer codes, and in particular the identification of logical operators with the generalized Knill--Laflamme conditions that have non-trivial commuting unitary operators, extends to arbitrary quantum error correcting codes. 
\end{remark}

\subsection{Isoclinic $n$-Planes in Euclidean $2n$-Space}

Next, we consider a class of examples that form a subclass of isoclinic subspace families with roots dating back over a century. First some background. 

The study of rotations in higher dimensions gained momentum with the discovery of quaternions by Hamilton \cite{hamilton1844theory}, which provided the first systematic approach to understanding rotations in three and four dimensions \cite{hardy1881elements, zhang1997quaternions}. Subsequently Clifford extended this framework to spaces of higher dimensions, introducing what are now called Clifford algebras \cite{garling2011clifford}, which generalized the approach to spaces of higher dimensions. Clifford algebras underpin the algebraic structure of rotations, leading naturally to explorations of planes that exhibit unique rotational properties, including isoclinic rotations as a distinguished case. The concept of isoclinic $n$-planes thus applies the idea of isoclinic subspaces to higher dimensional subspaces within Euclidean vector spaces.

Isoclinic $n$-planes are subspaces of $\mathbb{R}^{n}$ (or, more generally, $\mathbb{C}^{n}$) where rotations affect all vectors in a consistent way: pairs of orthogonal vectors within these planes rotate by the same angle while preserving their orthogonality. They are particularly interesting in even-dimensional spaces and Euclidean spaces, and have special properties that make them useful in higher-dimensional geometry and theoretical physics \cite{wong1960clifford, wong1961isoclinic, wong1977linear, wong1990normally}.
As such, study on even-dimensional space, $\mathbb{R}^{2n}$, has been heavily investigated. An $n$-{\it plane} in $\mathbb{R}^{2n}$ is an $n$-dimensional vector subspace of $\mathbb{R}^{2n}$ accompanied by its induced inner product. 
Two $n$-planes, $\mathcal V, \mathcal W$ are said to be isoclinic with each other if the angle between any non-zero vector in $\mathcal V$ and its projection onto $\mathcal W$ is the same for every non-zero vector in $\mathcal V$. If this angle is given by $\theta$, then $\mathcal V$ and $\mathcal W$ are isoclinic at angle $\theta$ in the sense defined above. 

We can extend the concepts within this class of isoclinic subspaces to utilize the Knill--Laflamme conditions and give a new perspective on the class. We note that \cite{kribs2019isoclinic} contains a similar class of examples for $n=2$ and $2\times 2$ matrices as part of its exposition (though without the Knill--Laflamme viewpoint).  

\begin{example} 
Given a (bounded) operator $A$ on a Hilbert space $\mathcal H$, we can consider its graph, 
\[
\mathcal C_A = \{ (x, Ax ) \, : \, x\in\mathcal H \},
\]
which is a subspace of $\mathcal H \oplus \mathcal H$ that is closed (due to the closed graph theorem in the infinite-dimensional case). We will also consider the subspace of $\mathcal H \oplus \mathcal H$ given by, 
\[
\mathcal C_\infty = \{ (0, x ) \, : \, x\in\mathcal H \}. 
\]

Now suppose that $A$ be a unitary operator on $\mathcal H$. Observe that the projection, which we denote by $P_A$, of $\mathcal H \oplus \mathcal H$ onto $\mathcal C_A$ is given in block matrix form by, 
\[
P_{A} = \frac12 \begin{pmatrix} I & A^* \\ A & I \end{pmatrix}.  
\]
Indeed, an easy way to see this is to note that $P_{A} \, \begin{pmatrix} I & A \end{pmatrix}^t = P_{A}$. Also observe the projection $P_\infty$ onto $\mathcal C_\infty$, is given in block matrix form by, 
\[
P_\infty =  \begin{pmatrix} 0 & 0 \\ 0 & I \end{pmatrix}.  
\]
And note the subspaces $\mathcal C_A$ and $\mathcal C_\infty$ have dimension equal to $\dim \mathcal H$ (for $\mathcal C_A$ using the fact that $A$ is unitary). 
One can verify by direct matrix calculation that for any unitary $A$, we have 
\[ 
P_\infty P_A P_\infty = \frac12 P_\infty \quad \mathrm{and} \quad 
P_A P_\infty P_A = \frac12 P_A . 
\]

Now suppose $B$ is another unitary operator on $\mathcal H$, and calculate the following operator product:  
\begin{align*}
P_A P_B P_A &=  
\frac18 \begin{pmatrix} I  & A^* \\ A  & I \end{pmatrix} \begin{pmatrix} I  & B^* \\ B  & I \end{pmatrix} \begin{pmatrix} I  & A^* \\ A  & I \end{pmatrix} \\ 
&=
\frac18 \begin{pmatrix} 2I + A^*B + B^*A & 2A^* + B^* + A^*BA^*     \\ 2A + B + AB^*A  & 2I + BA^* + AB^* \end{pmatrix} .
\end{align*}
Let us further assume that both $A = A^* = A^{-1}$ and $B = B^* = B^{-1}$ are Hermitian unitary operators and they anti-commute, $AB = -BA$. Then observe that the matrix on the right-side of this equation reduces to a scalar multiple of $P_A$; in fact we have 
\[
P_A P_B P_A = \frac12 P_A . 
\]

It follows that, given any family of Hermitian and pairwise anti-commuting unitary operators acting on a common Hilbert space, the graph subspaces $\mathcal C_A$ of the operators inside the direct sum of the Hilbert space with itself, together with the subspace $\mathcal C_\infty$, form an isoclinic family. We are thus in the scenario of Theorem~\ref{KLisoclinic} and we can investigate what the commuting unitary operators might be in the generalized Knill--Laflamme conditions. Interestingly, for this class of examples, there is a natural base subspace that stands out from the others, namely $\mathcal C_\infty$. 

Hence, we first find partial isometries $V_\infty$ and $V_A$ on $\mathcal H \oplus \mathcal H$ such that $P_\infty = V_\infty V_\infty^*$, $P_A = V_A V_A^*$ and $P_\infty = V_A^* V_A$. One can check that $V_\infty = P_\infty$ and 
\begin{equation}\label{classicalpisom}
V_A = \frac{1}{\sqrt{2}} \begin{pmatrix} 0  & I \\ 0  & A \end{pmatrix}
\end{equation} 
satisfy these identities. Knowing from Theorem~\ref{KLisoclinic} that we will find the commuting unitary form in Eq.~(\ref{generalKLeqn}), we compute to find the following (with $P_\infty$ in place of $\pc$):  
\[
P_\infty V_A^* V_B    P_\infty =  \frac{1}{\sqrt{2}}  V_{A,B}^{(\infty)} P_\infty = \frac{1}{\sqrt{2}}  P_\infty V_{A,B}^{(\infty)} , 
\]
where 
\[
V_{A,B}^{(\infty)} : = \begin{pmatrix} 0  & 0 \\ 0  & U_{A,B}^{(\infty)} \end{pmatrix},  
\]
and where $U_{A,B}^{(\infty)} = \frac{I + A^*B}{\sqrt{2}} = \frac{I + AB}{\sqrt{2}}$ is a unitary operator on $\mathcal H$. 
\end{example}

We can use the Pauli group introduced in the last subsection to produce maximal families of such isoclinic families. To see this, we recall the following result of Kestelman which gives a restriction on the dimension of spaces where there are large sets pairwise anti-commuting invertible matrices. 

\begin{theorem} \label{kes} \cite[Theorem 2]{kestelman1961anticommuting} 
If there exist at least $2q$ pairwise anti-commuting invertible $m \times m$ matrices, then $2^q$ divides $m$.  
\end{theorem}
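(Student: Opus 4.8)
The plan is to read off the divisibility $2^q \mid m$ from the representation theory of the complex Clifford algebra. Concretely, a family of $2q$ pairwise anti-commuting invertible matrices is, after a normalization, the same data as a unital module over the Clifford algebra $\mathrm{Cl}_{2q}(\mathbb{C})$, and in even dimension this algebra is a full matrix algebra $M_{2^q}(\mathbb{C})$, all of whose modules have dimension a multiple of $2^q$.

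First I would isolate the one structural observation that drives the argument: if $A_1,\dots,A_{2q}$ pairwise anti-commute and are invertible, then each square $z_i := A_i^2$ is invertible and commutes with every $A_j$. Indeed, for $j \neq i$ the two sign changes cancel, $A_i^2 A_j = A_i(-A_j A_i) = A_j A_i^2$, so $z_i$ is central in the algebra generated by the $A_i$ (and the $z_i$ commute with one another). This is the substitute for the usual hypothesis that the generators square to scalars, which is not available here.

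Next comes the normalization. Since $z_i$ is invertible, holomorphic functional calculus (or Cayley--Hamilton) produces a polynomial $s_i$ in $z_i$ with $s_i^2 = z_i^{-1}$; being a polynomial in $A_i^2$, this $s_i$ commutes with every $A_j$. Setting $A_i' := A_i s_i$, the anti-commutation survives because the $s$'s pass through the $A$'s freely, while $(A_i')^2 = A_i^2 z_i^{-1} = I$. Thus the $A_i'$ satisfy $A_i' A_j' + A_j' A_i' = 2\delta_{ij} I$, so $e_i \mapsto A_i'$ extends to a unital algebra homomorphism $\mathrm{Cl}_{2q}(\mathbb{C}) \to M_m(\mathbb{C})$, i.e.\ a nonzero $\mathrm{Cl}_{2q}(\mathbb{C})$-module structure on $\mathbb{C}^m$. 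Because $2q$ is even, $\mathrm{Cl}_{2q}(\mathbb{C}) \cong M_{2^q}(\mathbb{C})$ is simple with a single simple module of dimension $2^q$; hence $\mathbb{C}^m$ is a direct sum of copies of that module and $m$ is a multiple of $2^q$, as claimed.

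The main obstacle is exactly the normalization step, and more precisely the fact that the hypotheses supply only invertibility rather than scalar squares. The $A_i$ genuinely need not square to scalars (for instance $A_i$ may be conjugate to $J_\lambda \oplus J_{-\lambda}$), so one cannot simply declare the relations to be Clifford, or invoke a finite extraspecial $2$-group, at the outset. What rescues the argument is the centrality of $z_i = A_i^2$ together with the existence of a square root that is again a polynomial in $z_i$; these are what let me pass to the involutions $A_i'$ without disturbing anti-commutation, after which the remainder is the standard structure theory of $\mathrm{Cl}_{2q}(\mathbb{C})$. As a reassuring check on the $q=1$ case, the weaker conclusion $2 \mid m$ drops out immediately from $B A B^{-1} = -A$, giving $\det A = (-1)^m \det A$ with $\det A \neq 0$.
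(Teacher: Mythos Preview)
The paper does not give its own proof of this theorem; it is quoted from Kestelman \cite{kestelman1961anticommuting} and used as a black box, so there is nothing in the paper to compare against.

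Your argument is correct. The delicate point is the normalization step, and you have handled it properly: each $z_i = A_i^2$ is invertible and commutes with every $A_j$, so a square root $s_i$ of $z_i^{-1}$ that is a polynomial in $z_i$ (obtained by Hermite interpolation of a branch of $z^{-1/2}$ at the eigenvalues of $z_i$, matching derivatives up to the Jordan block orders) also commutes with every $A_j$. Hence the $A_i' = A_i s_i$ still pairwise anti-commute and now square to $I$, giving a unital representation of $\mathrm{Cl}_{2q}(\mathbb{C}) \cong M_{2^q}(\mathbb{C})$ on $\mathbb{C}^m$; Wedderburn then forces $2^q \mid m$. One minor quibble: the parenthetical ``or Cayley--Hamilton'' is a bit misleading on its own, since Cayley--Hamilton only expresses $z_i^{-1}$ as a polynomial in $z_i$---the square root genuinely needs the interpolation/functional-calculus step---but your surrounding discussion makes clear you understand this.
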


We also note that if we have an even number $\{ A_j\}_{j=1}^{2q}$ of pairwise anti-commuting invertible matrices, then their product $A_1A_2...A_{2q-1}A_{2q}$ is an invertible matrix that  anti-commutes with every element of $\{ A_j\}_{j=1}^{2q}$.  Hence the maximal number of pairwise anti-commuting invertible matrices is always an odd number.

It follows from Theorem~\ref{kes} that if $m = 2^qp$, where $p$ is odd, then the maximum possible number of pairwise anti-commuting invertible $m \times m$ matrices is $2q+1$.  
An inductive construction of Kestleman shows that this upper bound is always attained with a set of Hermitian unitary matrices. We can use elements of the Pauli groups to give explicit maximal sets of examples.  For the base case where $q=1$, the matrices $X\otimes I_p$, $Y\otimes I_p$ and $Z\otimes I_p$ are three anti-commuting Hermitian unitary $m \times m$ matrices.  For the inductive step, suppose $\{ B_i\}_{i=1}^{2q+1}$ are set of $2q+1$ pairwise anti-commuting $m \times m$ Hermitian unitary matrices, then $\{ X\otimes B_i\}_{i=1}^{2q+1}\bigcup \{Y\otimes I_m,Z\otimes I_m\}$ are a set of $2q+3$ pairwise anti-commuting $2m \times 2m$ Hermitian unitary matrices. 

We can then use the construction from earlier in this section together with these anti-commuting Hermitian unitary matrices to construct pairwise isoclinic subspaces. Let $n$ be an even number with $n=2^kp$ where $p$ is odd. Then we have $2k-1$ anti-commuting $\frac{n}{2} \times \frac{n}{2}$ Hermitian unitary matrices. Hence the above construction applied pairwise to these matrices gives $2k$ pairwise isoclinic $\frac{n}{2}$-dimensional subspaces of $\mathbb{C}^n$.

\subsection{Mutually Unbiased Quantum Measurements}

In this section, we show how the generalized Knill--Laflamme conditions can be applied to a recently introduced notion in quantum information theory that is used in quantum cryptography.  

By a {\it $d$-outcome (quantum) measurement} acting on a Hilbert space $\mathcal H$, we mean a set of orthogonal projections $\{P_a \}_{a=1}^d$ with mutually orthogonal ranges that span all of $\mathcal H$, so $\sum_a P_a = I$. The following definition is from \cite{farkas2023mutually, tavakoli2021mutually}, and was motivated by the desire to find higher dimensional versions of mutually unbiased bases \cite{durt2010mutually}, which are captured in the special case with one-dimensional projections. 

\begin{definition} 
Two $d$-outcome measurements $\{P_a \}_{a=1}^d$ and $\{Q_b \}_{b=1}^d$ on a Hilbert space $\mathcal H$ are called {\it mutually unbiased measurements (MUMs)} if 
\begin{equation}\label{MUMdefn}
P_a = d \, P_a Q_b P_a \quad \quad \mathrm{and} \quad \quad Q_b = d \, Q_b P_a Q_b, 
\end{equation} 
for all $1 \leq a,b \leq d$. 
\end{definition}

By computing traces on both sides of these equations (as we did in an earlier proof), one sees that all of the projections $P_a$, $Q_b$ must have the same rank, which we will denote in this section by the integer $k = \dim (P_a \mathcal H) = \dim (Q_b \mathcal H)$. Since the individual sets of projections have mutually orthogonal ranges, and hence trivially satisfy the isoclinic equations Eq.~(\ref{isoconds}) (with $\lambda =0$), it follows from Theorem~\ref{isoprop} and Eq.~(\ref{MUMdefn}) that the entire set of ranges of the projections $\{P_a, Q_b \}_{a,b}$ forms a family of isoclinic subspaces. 

We can thus apply the results of the previous section to MUMs, and doing so yields the following testable conditions for MUMs when given a family of unitary operators and a subspace, paralleling the Knill--Laflamme conditions for quantum error correction. To keep simple notation, we state the result for `2-element' MUMs (i.e., those with two measurements), but the result extends in an obvious way to MUMs of any size. 

\begin{theorem}\label{KLMUM}
A family of unitary operators $\{ U_i \}_i$ on a Hilbert space $\mathcal H$ and a subspace $\mathcal C$ defines a $d$-outcome 2-element MUM if and only if there is a partition of the partial isometries $\{ U_i \pc \}_i$ into two sets $\{ V_a \}_{a=1}^d$,  $\{ W_b \}_{b=1}^d$ for which the projection sets $\{ V_a V_a^*\}_a$ and $\{ W_b W_b^* \}_b$ define measurements and there are unitary operators $U_{ab}$ such that 
\begin{equation}\label{KLMUMconds}
P_{\mathcal C} V_a^* W_b P_{\mathcal C} = \frac{1}{\sqrt{d}} \, U_{ab} P_{\mathcal C} = \frac{1}{\sqrt{d}} \, P_{\mathcal C} U_{ab} \quad \quad \forall\, a,b.      
\end{equation}
\end{theorem}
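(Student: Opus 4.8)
The plan is to recognize that the defining relations for a MUM are exactly the isoclinic projection relations of Eq.~(\ref{isoconds})/Eq.~(\ref{pisomisoeqn}) with the specific constant $\lambda = 1/d$, and then to invoke the two directions of the generalized Knill--Laflamme machinery (Lemma~\ref{pisom} and Theorem~\ref{KLcondition}) essentially verbatim. First I would observe that each $V_a = U_{i_a}P_{\mathcal C}$ and $W_b = U_{j_b}P_{\mathcal C}$ is a partial isometry with common initial projection $P_{\mathcal C}$, since $(U_i P_{\mathcal C})^*(U_i P_{\mathcal C}) = P_{\mathcal C} U_i^* U_i P_{\mathcal C} = P_{\mathcal C}$, with final projections $P_a = V_a V_a^*$ and $Q_b = W_b W_b^*$. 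Treating the combined list $\{V_a\}_a \cup \{W_b\}_b$ as a single family of partial isometries with common initial space $\mathcal C$ lets me apply the earlier single-family results to all pairs at once.

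For the forward direction, assume $\{U_i\}_i$ and $\mathcal C$ define a MUM. Rewriting Eq.~(\ref{MUMdefn}) as $P_a Q_b P_a = \tfrac1d P_a$ and $Q_b P_a Q_b = \tfrac1d Q_b$ shows the cross pairs satisfy the isoclinic projection equations with $\lambda_{ab} = 1/d$; the within-measurement pairs satisfy them trivially with $\lambda = 0$ (orthogonal ranges) and the diagonal pairs with $\lambda = 1$. Thus the whole family $\{P_a, Q_b\}$ obeys the hypotheses Eq.~(\ref{pisomisoeqn}) of Lemma~\ref{pisom}, and the lemma produces unitaries $U_{ij}$ commuting with $P_{\mathcal C}$ with $P_{\mathcal C}V_i^* V_j P_{\mathcal C} = \sqrt{\lambda_{ij}}\,U_{ij}P_{\mathcal C}$. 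Restricting to the cross terms gives exactly Eq.~(\ref{KLMUMconds}), since $\sqrt{\lambda_{ab}} = 1/\sqrt d$.

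For the converse, assume the partition, the measurement conditions, and Eq.~(\ref{KLMUMconds}). I would verify that the combined family satisfies the generalized Knill--Laflamme conditions Eq.~(\ref{generalKLeqn}): the cross terms are Eq.~(\ref{KLMUMconds}), with the reversed cross terms handled by taking $U_{ba} = U_{ab}^*$ (legitimate since $U_{ab}$ commutes with $P_{\mathcal C}$); the diagonal terms give $P_{\mathcal C}V_a^*V_a P_{\mathcal C} = P_{\mathcal C}$ (so $\lambda_{aa}=1$, $U_{aa}=I$); and the within-measurement terms vanish because $V_a^* V_{a'} = V_a^* P_a P_{a'} V_{a'} = 0$ for $a\neq a'$ by orthogonality of the measurement ranges. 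Theorem~\ref{KLcondition} then yields $P_i P_j P_i = |\gamma_{ij}|^2 P_i$; on the cross pairs $\gamma_{ab} = \lambda_{ab}/\sqrt{\lambda_{aa}\lambda_{bb}} = 1/\sqrt d$, giving $P_a Q_b P_a = \tfrac1d P_a$ and, symmetrically, $Q_b P_a Q_b = \tfrac1d Q_b$, which is Eq.~(\ref{MUMdefn}). Since the hypothesis already supplies that $\{P_a\}$ and $\{Q_b\}$ are measurements, this shows they form a $d$-outcome 2-element MUM.

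The routine parts are genuinely routine here, so the only real care is bookkeeping: merging the two measurement families into one so that the single-family Lemma~\ref{pisom} and Theorem~\ref{KLcondition} apply to every pair, and keeping track of which pairs carry $\lambda = 0$, $\lambda = 1$, or $\lambda = 1/d$. I do not anticipate a substantive obstacle beyond confirming that the reversed cross relation produces a legitimate commuting unitary $U_{ba}$, which is immediate from taking adjoints and using that $U_{ab}$ commutes with $P_{\mathcal C}$.
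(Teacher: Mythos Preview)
Your proposal is correct and follows essentially the same approach as the paper: the paper's proof is simply the two-line observation that the forward direction follows from Lemma~\ref{pisom} and the backward direction from Theorem~\ref{KLcondition}, and your argument is a faithful unpacking of exactly that, with the bookkeeping of merging the two measurement families and sorting the $\lambda$-values made explicit.
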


\begin{proof} 
The forward direction of the proof follows directly from Lemma~\ref{pisom}, and the backward direction follows from Theorem~\ref{KLcondition}. 
\end{proof} 

As shown in \cite{farkas2023mutually}, any pair of finite-dimensional MUMs can be written in a {\it canonical form}, as two sets of $d$ orthogonal (rank-$k$) projections on $\mathbb{C}^k\otimes \mathbb{C}^d$ given by (up to a change of basis), 
\[
P_a = I_k \otimes \kb{a}{a} \quad\quad \forall\, 1\leq a \leq d, 
\]
where $\{ \ket{a} \}_{a=1}^d$ is the computational basis for $\mathbb{C}^d$, and in the same basis, 
\[
Q_b = \frac1d \sum_{i,j}^d V^b_{ij} \otimes \kb{i}{j} \quad \quad \forall\, 1 \leq b \leq d,
\]
where $V^b_{ij}$ are operators on $\mathbb{C}^k$ that satisfy the following operator relations: 
\begin{equation}
\left\{ 
\begin{array}{rcll} 
V^b_{ii} &=& I_k & \forall\, b,i \\ 
(V^b_{ij})^* &=& V^b_{ji} & \forall\, b,i,j \\
V^b_{i_1 i_2} &=& V^b_{i_1 j} V^b_{j i_2}  & \forall\, b,i_1, i_2, j  \\
\sum_b V^b_{ij} &=& \delta_{ij} \, d \, I_k & \forall\, i, j  .
\end{array}
\right.
\end{equation}

Let us show how the canonical form is related to the characterization of MUMs given in Theorem~\ref{KLMUM}. Firstly, if we have a MUM given by rank-$k$ projections $\{ P_a , Q_b\}_{a,b=1}^d$ in its canonical form, we can pick one of the $P_a$ to identify as $\pc$, let us say $\pc = P_1 = I_k \otimes \kb{1}{1}$. Then we can define $V_a = I_k \otimes \kb{a}{1}$ for $1\leq a \leq d$, and 
\[
W_b = \frac{1}{\sqrt{d}} \, \sum_{j=1}^d V_{j1}^b \otimes \kb{j}{1}  \quad \quad \forall\, 1\leq b \leq d.
\]
It then follows that for all $a,b$, we have $V_a^* V_a = \pc$, $P_a = V_a V_a^*$, $W_b^* W_b = \pc$, $Q_b = W_b W_b^*$, and Eq.~(\ref{KLMUMconds}) are satisfied with $U_{ab} = V_{a1}^b \otimes \kb{1}{1}$. 

On the other hand, we can use the generalized Knill--Laflamme conditions of Theorem~\ref{KLMUM} to give an alternate derivation of the MUM canonical form. We state this as a result. 

\begin{corollary}\label{MUMcanonical} 
Suppose that $\{ V_a\}_{a=1}^d$ and $\{ W_b \}_{b=1}^d$ are partial isometries on $\mathcal H$ with $k$-dimensional mutually orthogonal ranges in the individual sets and satisfying the relations of Eq.~(\ref{KLMUMconds}) for a subspace $\mathcal C$. Then there is a unitary $U: \mathcal H \rightarrow \mathbb{C}^k \otimes \mathbb{C}^d$ such that the family of projections $\{ U V_a V_a^* U^* , U W_b W_b^* U^* \}_{a,b=1}^d$ is a MUM in the canonical form. 
\end{corollary}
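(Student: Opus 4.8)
The plan is to construct the unitary $U$ explicitly from the data of the partial isometries $\{V_a\}$ and $\{W_b\}$, and then verify that conjugation by $U$ sends the two projection families into the canonical form. The starting observation is that the individual sets $\{V_a V_a^*\}_a$ and $\{W_b W_b^*\}_b$ are families of $k$-dimensional mutually orthogonal projections, and the relations of Eq.~(\ref{KLMUMconds}) force the off-diagonal scalar to be $\tfrac{1}{\sqrt d}$, so that the combined family is isoclinic at the MUM angle. Fixing $\mathcal C$ as the reference $k$-dimensional anchor subspace, I would first build an isometry $U_0 \colon \mathcal C \to \mathbb{C}^k$ identifying $\mathcal C$ with the first ``block.''

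First I would use the partial isometries $V_a$ (each with initial space $\mathcal C$ and final space $V_a V_a^* \mathcal H$) to define the identification $\mathcal H \cong \mathbb{C}^k \otimes \mathbb{C}^d$: since the final projections $P_a = V_a V_a^*$ are mutually orthogonal and (because the $V_a$ are among the restrictions of a spanning measurement) sum to the identity, the spaces $P_a \mathcal H$ decompose $\mathcal H$ into $d$ orthogonal $k$-dimensional blocks. The map $U$ is then specified by declaring, for $x \in \mathcal C$, that $U V_a x = (U_0 x) \otimes \ket{a}$; extending linearly and using orthonormality of the $\ket{a}$ shows $U$ is a well-defined unitary. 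Under this $U$, the projection $P_a = V_a V_a^*$ becomes $U V_a V_a^* U^* = I_k \otimes \kb{a}{a}$, which is exactly the canonical form for the first family, with $\pc$ playing the role of $P_1$.

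Next I would compute the image of each $Q_b = W_b W_b^*$ under conjugation by $U$. The idea is to express $W_b$ in terms of the $V_a$ and the commuting unitaries $U_{ab}$ supplied by Eq.~(\ref{KLMUMconds}). Using $\pc V_a^* W_b \pc = \tfrac{1}{\sqrt d} U_{ab} \pc$ together with $V_a V_a^* = P_a$ and $\sum_a P_a = I$, one resolves $W_b = \sum_a P_a W_b = \sum_a V_a (\pc V_a^* W_b \pc) = \tfrac{1}{\sqrt d}\sum_a V_a U_{ab} \pc$. Conjugating and forming $U W_b W_b^* U^*$ then produces an operator of the block form $\tfrac1d \sum_{i,j} V^b_{ij} \otimes \kb{i}{j}$, where the blocks $V^b_{ij}$ are read off from the unitaries $U_{ab}$ (transported to $\mathbb{C}^k$ via $U_0$). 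The four defining operator relations for the $V^b_{ij}$ then follow: the diagonal identity $V^b_{ii} = I_k$ from $\pc V_a^* W_b \pc (\pc W_b^* V_a \pc) = \tfrac1d \pc$ and unitarity of the $U_{ab}$; the adjoint relation from $(V^b_{ij})^* = V^b_{ji}$ by inspection; the cocycle-type multiplicativity $V^b_{i_1 i_2} = V^b_{i_1 j} V^b_{j i_2}$ from composing the intertwining relations; and the final summation relation $\sum_b V^b_{ij} = \delta_{ij}\, d\, I_k$ from the fact that $\{Q_b\}$ is itself a measurement, i.e.\ $\sum_b Q_b = I$.

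The main obstacle I expect is verifying the multiplicativity relation $V^b_{i_1 i_2} = V^b_{i_1 j} V^b_{j i_2}$ cleanly, since this is the one relation that genuinely encodes the ``unitary'' nature of the $U_{ab}$ blocks rather than following from orthogonality or the partition-of-unity alone. It should emerge from the identity $\pc W_b^* W_b \pc = \pc$ (that $W_b$ is a partial isometry with initial space $\mathcal C$) combined with the expansion $W_b = \tfrac{1}{\sqrt d}\sum_a V_a U_{ab}\pc$ and the isometry relations $V_a^* V_{a'} = \delta_{aa'}\pc$; the care needed is in tracking how the commuting unitaries $U_{ab}$ interleave with the $V_a$ and confirming that the resulting block products close up consistently. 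Once that relation is secured, the remaining verifications are routine bookkeeping, and the constructed $U$ realizes the claimed canonical form.
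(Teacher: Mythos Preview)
Your proposal is correct and follows essentially the same route as the paper: build $U$ from the $V_a$'s (the paper first normalizes so that $V_1 = \pc$ via a pull-back, whereas you carry an auxiliary isometry $U_0$, a purely cosmetic difference) and then read off the blocks of $U W_b W_b^* U^*$, a verification the paper simply leaves to the reader. Your anticipated obstacle dissolves once you observe that the blocks factor as $V^b_{ij} = \tilde U_{ib}\tilde U_{jb}^*$ with $\tilde U_{ab} = U_0 (U_{ab}|_{\mathcal C}) U_0^{-1}$, from which the multiplicativity relation is immediate by unitarity of $\tilde U_{jb}$; it is rather the summation relation $\sum_b V^b_{ij} = \delta_{ij}\, d\, I_k$ that requires the extra hypothesis $\sum_b W_b W_b^* = I$.
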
 

\begin{proof} 
First note that we can assume without loss of generality that $V_1$ is a projection and satisfies $V_1 = V_1 V_1^* = \pc$.

Indeed, this can be accomplished by the `pull-back' technique used earlier, wherein we multiply both sides of Eq.~(\ref{KLMUMconds}) on the left by $V_1$ and on the right by $V_1^*$. Then one can verify the equations will be satisfied by the operators $V_a^\prime = V_a V_1^*$, $W_b^\prime = W_b V_1^*$ and $U_{ab}^\prime = V_1 U_{ab} V_1^*$. 

As $\{P_a: = V_a V_a^* \}_a$ is a family of projections with mutually orthogonal ($k$-dimensional) ranges, we can find a unitary $U: \mathcal H \rightarrow \mathbb{C}^k \otimes \mathbb{C}^d$ such that $UP_aU^* = I_k \otimes \kb{a}{a}$ and $V_{a,U}:= UV_a U^* = I_k \otimes \kb{a}{1}$. 

Now observe that for all $b$, we have $W_{b,U}:= U W_b U^* = (UW_b U^*)(UP_1 U^*)$, and so there are operators $V_{j1}^b$ on $\mathbb{C}^k$ such that $W_{b,U} = \sum_{j=1}^d V_{j1}^b \otimes \kb{j}{1}$. Then, by defining $Q_b:= W_{b,U} W_{b,U}^*$, we leave it to the interested reader to verify that the projections $\{ P_a, Q_b\}_{a,b=1}^d$ define a MUM in canonical form. 
\end{proof} 

We next extend the construction that leads to the classical family of isoclinic subspaces from $n$-planes presented in the previous subsection, and use it to construct a family of MUMs.  

\begin{example} 
Let $d\geq 1$ be a fixed positive integer and let $\omega = e^{\frac{2\pi i}{d}}$ be a primitive $d$-th root of unity. Suppose that $A$ is a unitary operator on a Hilbert space $\mathcal H$ with $A^d=I$. We will focus on finite-dimensional spaces here, but the construction goes through for any Hilbert space. 

Let $k= \dim\mathcal H$ and consider the $k$-dimensional (as $A$ is unitary) subspace of $\mathcal K := \mathcal H^{(d)} \cong \mathcal H \otimes \mathbb{C}^d$ given by 
\[
\mathcal C_A = \big\{ (x, Ax, \ldots , A^{d-1} x ) \, : \, x\in\mathcal H \big\}. 
\]
Generalizing the $d=2$ case above, note that the projection $P_A$ of $\mathcal K$ onto $\mathcal C_A$ is given in $d\times d$ block matrix form as,
\[
P_{A} = \frac1d \begin{pmatrix} I & A^* & (A^*)^2 &  \cdots &  (A^*)^{d-1} \\ A & I & A^*  & \dots&(A^*)^{d-2} \\ A^2 & A & I & \ddots &  \vdots\\  \vdots & \vdots & \ddots & \ddots & A^*   \\  A^{d-1} & A^{d-2} & \cdots & A & I \end{pmatrix}.  
\]

We can similarly carry through this subspace and projection description for each of the $d$ unitary operators $\omega^r A$, for $0 \leq r \leq d-1$, replacing $A$ by $\omega^r A$ in the projection above, and obtain the rank-$k$ projections $\{ P_{(\omega^r A)} \}_{r=0}^{d-1}$ with ranges
\[
P_{(\omega^r A)} \mathcal K = \big\{ (x, (\omega^r A) x,  \ldots , (\omega^r A)^{d-1} x ) \, : \, x\in\mathcal H \big\}. 
\]
Observe that this family of ($d$ in total) projections satisfies the completeness relation 
\[
\sum_{r=0}^{d-1} P_{(\omega^r A)} = I, 
\]
which makes use of the cyclotomic polynomial root equation $1 + (\omega^s) + \ldots + (\omega^s)^{d-1} = 0$ for all $0 \leq s \leq d-1$. This also implies 
the projections have mutually orthogonal ranges as, for all $s$,   
\[
P_{(\omega^s A)} = P_{(\omega^s A)} \, I = \sum_{r=0}^{d-1} P_{(\omega^s A)} P_{(\omega^r A)}   = P_{(\omega^s A)} + \sum_{r\neq s} P_{(\omega^s A)} P_{(\omega^r A)}, 
\]
and so each $P_{(\omega^s A)} P_{(\omega^r A)} =0$. 

Below we will also make use of a generalization of the projection $P_\infty$, which projects onto the subspace $\mathcal C_\infty = \{ (0, \ldots , 0, x ) \, : \, x\in\mathcal H \}$ of $\mathcal H^{(d)}$. In the $d\times d$ block matrix form used above, $P_\infty$ is represented by the matrix with $I$ in the $(d,d)$ entry and $0$'s in all other entries. 
\end{example} 

We now apply Theorem~\ref{KLMUM} to the above class of projections to construct MUMs.  

\begin{corollary}\label{mumconstruction}
Let $d$ be a positive integer and let $\omega$ be a primitive $d$th root of unity. Suppose $A_1,\dots , A_n$ are unitary operators on a Hilbert space $\mathcal H$ such that:  
\[
A_i^d = I \quad \mathrm{and} \quad A_i A_j = \omega A_j A_i \quad \mathrm{for} \quad 1 \leq i < j \leq n. 
\]
Then the sets of projections $\mathcal P_i = \big\{ P_{(\omega^r A_i)}    \big\}_{r=0}^{d-1}$, for $1\leq i \leq n$, form an $n$-element family of $d$-outcome MUMs, with projections of rank equal to $\dim \mathcal H$. 
\end{corollary}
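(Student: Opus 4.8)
The plan is to realize the whole family from a single anchor subspace and then verify the generalized Knill--Laflamme/MUM conditions of Theorem~\ref{KLMUM}, exactly as in the preceding $n$-plane example. First I would record that each set $\mathcal{P}_i = \{P_{(\omega^r A_i)}\}_{r=0}^{d-1}$ is already a legitimate $d$-outcome measurement with rank-$(\dim\mathcal H)$ projections: this is precisely the content of the Example preceding the corollary applied to the single unitary $A_i$ (which satisfies $A_i^d = I$), giving mutually orthogonal ranges and completeness $\sum_r P_{(\omega^r A_i)} = I$. So the only real substance is to show that any two distinct measurements $\mathcal{P}_i$, $\mathcal{P}_j$ are mutually unbiased.

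For that, I would take $\mathcal C = \mathcal C_\infty$ as the common base subspace with projection $P_\infty$, and for each unitary $A = \omega^r A_i$ introduce the partial isometry $V_A$ obtained by generalizing Eq.~(\ref{classicalpisom}): in $d\times d$ block form $V_A$ has last block-column $\tfrac{1}{\sqrt d}(I, A, \ldots, A^{d-1})^{\t}$ and zeros elsewhere. A direct check gives $V_A^* V_A = P_\infty$ and $V_A V_A^* = P_A$, so these are partial isometries with common initial projection $P_\infty$ and the required final projections. Because both $V_A^*$ and $V_B$ are supported on the last block row and column respectively, the product $P_\infty V_A^* V_B P_\infty$ collapses to the operator whose only nonzero block is the $(d{-}1,d{-}1)$ entry $\tfrac1d \sum_{m=0}^{d-1}(A^*)^m B^m$.

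The crux is to show, for $A = \omega^r A_i$ and $B = \omega^s A_j$ with $i \ne j$, that $\Phi := \sum_{m=0}^{d-1}(A^*)^m B^m$ satisfies $\Phi\Phi^* = \Phi^*\Phi = d\,I$, so that $U := \tfrac{1}{\sqrt d}\Phi$ is unitary on $\mathcal H$ and $P_\infty V_A^* V_B P_\infty = \tfrac{1}{\sqrt d}\,\widehat U P_\infty = \tfrac{1}{\sqrt d} P_\infty \widehat U$, where $\widehat U$ is the evident unitary extension of $U$ to $\mathcal K$ acting as the identity off the last block (hence commuting with $P_\infty$). This is exactly Eq.~(\ref{KLMUMconds}) with $\gamma = 1/\sqrt d$. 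To prove $\Phi\Phi^* = dI$ I would first note that $A,B$ inherit $A^d = B^d = I$ and $AB = \omega^{\pm1} BA$ (the scalars $\omega^r,\omega^s$ do not affect the commutator), then expand $\Phi\Phi^* = \sum_{m,n}(A^*)^m B^{m-n}A^n$, push each $A^n$ through $B^{m-n}$ via $B^kA = \bar\omega^k A B^k$ to reach $\sum_{m,n}\bar\omega^{(m-n)n}A^{n-m}B^{m-n}$, and finally reindex by $k \equiv m-n \pmod d$ so the inner sum over $n$ becomes the character sum $\sum_{n=0}^{d-1}\bar\omega^{kn} = d\,\delta_{k,0}$, leaving $\Phi\Phi^* = dI$.

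With these conditions in hand I would invoke Theorem~\ref{KLMUM} in its evident multi-measurement form --- equivalently, feed the relations straight into Theorem~\ref{KLcondition} with $\lambda_{ii}=1$ and $\lambda_{ij}=1/\sqrt d$ --- to obtain $P_A P_B P_A = \tfrac1d P_A$ for every cross pair, i.e.\ the MUM relations Eq.~(\ref{MUMdefn}); the within-measurement orthogonality ($\lambda = 0$) is already covered by the Example, and all ranks equal $\dim\mathcal H$. The main obstacle is the $\Phi\Phi^* = dI$ computation: the bookkeeping of the $\omega$-commutation phases must be done carefully, but it is exactly where both hypotheses $A_i^d = I$ and $A_iA_j = \omega A_jA_i$ enter, and it reduces cleanly to a root-of-unity character sum.
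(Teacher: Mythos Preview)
Your proposal is correct and follows essentially the same route as the paper: the same anchor subspace $\mathcal C_\infty$, the same partial isometries $V_A$ (your block-column description matches the paper's $V_A^*$ block-row formula), the same reduction of $P_\infty V_A^* V_B P_\infty$ to the single corner block, and the same unitarity check via a root-of-unity character sum after pushing powers of $A$ through powers of $B$. The only cosmetic difference is that you absorb the scalars $\omega^r,\omega^s$ into $A,B$ up front, whereas the paper tracks the factor $(\omega^{s-r})^j$ explicitly; the resulting computations are otherwise identical.
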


\begin{proof} 
Let us focus for the proof on a pair of unitary operators $A$, $B$ on $\mathcal H$ with $A^d = I = B^d$ and $AB = \omega BA$. To apply Theorem~\ref{KLMUM}, we will use $\mathcal C = \mathcal C_\infty$ as the anchor subspace recall with projection $P_\infty$. We define (for any such $A$) a partial isometry $V_A$ on $\mathcal H^{(d)}$ with adjoint given in $d\times d$ block matrix form by 
\[
V_A^* = \frac{1}{\sqrt{d}} \begin{pmatrix} 0  & 0 & \ldots & 0 \\ \vdots  & \vdots & \vdots & \vdots \\ 0  & 0 & \ldots & 0 \\ I  & A^* & \ldots & (A^*)^{d-1}  \end{pmatrix} .  
\]
Then observe that $P_A = V_A V_A^*$ and $P_\infty = V_A^* V_A$, and $V_A = V_A P_\infty$. 
As we did above, we can replace $A$ with $\omega^rA$, for $0\leq r \leq d-1$, to define $V_{(\omega^r A)}$ accordingly. 

It remains to verify Eq.~(\ref{KLMUMconds}) are satisfied by this family of partial isometries (with the $V_{(\omega^r A)}$, respectively $V_{(\omega^s B)}$, playing the roles of $V_a$'s, respectively $W_b$'s). To this end, observe that for all $0\leq r,s\leq d-1$, we have 
\[
P_\infty V_{(\omega^r A)}^* V_{(\omega^s B)}    P_\infty = V_{(\omega^r A)}^* V_{(\omega^s B)}   =   \frac{1}{\sqrt{d}}  V_{(A,r,B,s)}^{(\infty)} P_\infty = \frac{1}{\sqrt{d}}  P_\infty V_{(A,r,B,s)}^{(\infty)} ,  
\]
where $V_{(A,r,B,s)}^{(\infty)}$ given by 
\[
V_{(A,r,B,s)}^{(\infty)} = \begin{pmatrix} 0  & \ldots & 0 \\ \vdots & \vdots & \vdots \\ 0 & \ldots  & U_{(A,r,B,s)} \end{pmatrix},  
\]
with 
\[
U_{(A,r,B,s)} = \frac{1}{\sqrt{d}} \sum_{j=0}^{d-1} (\omega^{s-r})^j \, (A^*)^j B^j   . 
\]
Thus we complete the proof by showing $U_{(A,r,B,s)}$ is unitary. Begin the calculation as follows, using the facts that $A^* = A^{d-1}$, $B^* = B^{d-1}$ and their anti-commutation relation: 
\begin{align*}
    U_{(A,r,B,s)} U_{(A,r,B,s)}^* &= \frac{1}{d} \sum_{j_1,j_2=0}^{d-1} (\omega^{s-r})^{j_1-j_2} \, (A^*)^{j_1} B^{j_1} (B^*)^{j_2} A^{j_2} \\   
    &= \frac{1}{d} \sum_{j_1,j_2=0}^{d-1} (\omega^{s-r})^{j_1-j_2} \, (A^*)^{j_1} B^{[j_1 + j_2(d-1)]}  A^{j_2} \\ 
    &= I \, + \, \frac{1}{d} \sum_{j_1 \neq j_2}  \omega^{f(r,s,j_1,j_2)} \, (A^{[j_1(d-1) + j_2]}) (B^{[j_1 + j_2(d-1)]}), 
\end{align*}
where $f(r,s,j_1,j_2) = (s-r) (j_1-j_2) - j_2[j_1 + j_2(d-1)]$. 
In the sum, make the substitution $i = j_1 - j_2 \, (\mathrm{mod}\,d)$ for $j_1\neq j_2$, and observe that:
\[
\left\{ 
\begin{array}{lcll}
     \hfill f(r,s,j_1,j_2) & \equiv & (s-r)i - j_2 i & (\mathrm{mod}\,d)  \\
     \hfill j_1(d-1) + j_2 & \equiv & -i & (\mathrm{mod}\,d) \\
     \hfill  j_1 + j_2(d-1) & \equiv & i & (\mathrm{mod}\,d)
\end{array}
\right. . 
\]
Hence, with this substitution, the (non-identity) sum in the last line of the calculation above becomes: 
\[
\frac{1}{d} \sum_{i=1}^{d-1} \sum_{j_2=0}^{d-1}  \omega^{[(s-r)i - j_2 i]} \, A^{-i} B^{i} = 
\frac{1}{d} \sum_{i=1}^{d-1} \omega^{(s-r)i} \Big( \sum_{j_2=0}^{d-1}  ( \omega^{-i})^{j_2} \Big) \, A^{-i} B^{i} = 0, 
\]
again here using the cyclotomic polynomial identity. It now follows that $U_{(A,r,B,s)}$ is unitary. 

Thus we have shown that Eq.~(\ref{KLMUMconds}) are satisfied for this family of partial isometries, and so it follows from Theorem~\ref{KLMUM}, and extending the above argument to the whole family $\{A_i\}_{i=1}^n$, that the projection families $\{\mathcal P_i\}_{i=1}^n$ form a MUM; in particular, we have    
\[
P_{(\omega^r A_i)} P_{(\omega^s A_j)} P_{(\omega^r A_i)} = \frac1d P_{(\omega^r A_i)} , 
\]
for all $0 \leq r,s \leq d-1$ and all $1 \leq i < j \leq n$.  
\end{proof}

\begin{remark}
This MUM construction appears to be new. It is also `coordinate-free', in that it only relies on the algebraic relations of the defining unitary operators and not a particular Hilbert space representation of them. It would be interesting to investigate possible connections between the class of MUMs that are generated by this construction and constructions determined by specific representations (e.g., such as the $n$-qudit construction of Theorem~2.16 in \cite{farkas2023mutually}). 
\end{remark}

\section{Concluding Remarks} \label{sec:Conc}

This work has brought together the classical topic of isoclinic subspaces with the modern topic of quantum error correction, with a generalization of a key result from the latter to the former. This generates a number of potentially new lines of investigation, some as direct outgrowths of the current work and others more speculative. We briefly note a few possibilities here. 

The Knill--Laflamme theorem and conditions have sparked several generalizations and applications even within the subject of quantum error correction itself. As noted above, key results in the stabilizer formalism \cite{gottesman1996d} rely on it, and as we showed, the generalized conditions presented here capture the logical operators for a stabilizer code along with the code's correctable error sets. We wonder if this idea, and in particular using the general conditions to algebraically describe logical operators, extends to more general codes and error models.    

It would be interesting to know if extensions of the Knill--Laflamme theory for other types of quantum error correction could themselves have natural extensions that would generalize the conditions considered here. Specifically, we note the operator algebra generalizations of quantum error correction \cite{beny2007generalization,beny2007quantum,beny2009quantum} and related notions and applications such as complementarity with private codes \cite{crann2016private,jochym2014quantum,kretschmann2008complementarity, kribs2018quantum}. From  matrix theory and (operator) dilation theory perspectives, we further expect that the notions of higher-rank numerical ranges \cite{choi2008geometry,choi2006higher2,choi2005quantum,GLPS,li2011generalized,li2007higher,li2009condition,li2008canonical,martinez2008higher,woerdeman2008higher}, both singular and joint, consideration of which was initially motivated by the Knill--Laflamme conditions, have extensions motivated by the generalized conditions.     

Whenever a generalized framework brings different notions under the same umbrella, new crossover lines of investigation can also be considered. The applications and examples we presented in Section~4 have already provided some indications of this; indeed, this included Pauli groups motivated by quantum error models giving new examples of constructions of isoclinic subspaces, the generalized conditions giving a new perspective on a classical family of isoclinic subspaces, the classical construction motivating a new construction of MUMs, and the generalized conditions giving an alternate construction of the canonical form for MUMs, which we think suggests a deeper (and yet to be explored) connection between the theory of MUMs and quantum error correction.  

We also wonder if the generalized Knill--Laflamme conditions might provide new information or perspectives when applied to other families of isoclinic subspaces beyond what we have considered here. For instance, the isoclinic subspace conditions have been shown to arise in the study of quantum designs \cite{zauner1999quantum}. 

We plan to undertake some of these investigations elsewhere and we invite other interested researchers to do the same.

\vspace{0.1in}

\noindent{\bf Acknowledgements.}

The first author was partially supported by the NSERC Discovery Grant RGPIN-2024-400160. The second author was partially supported by the NSERC Discovery Grant RGPIN-2022-04149.

\bibliographystyle{plain}
\bibliography{refs}

\begin{thebibliography}{10}

\bibitem{balla2019equiangular}
Igor Balla and Benny Sudakov.
\newblock Equiangular subspaces in {E}uclidean spaces.
\newblock {\em Discrete \& Computational Geometry}, 61(1):81--90, 2019.

\bibitem{bennett1996ch}
Charles~H Bennett, David~P DiVincenzo, John~A Smolin, and William~K Wootters.
\newblock Mixed state entanglement and quantum error correction.
\newblock {\em Physical Review A}, 54:3824, 1996.

\bibitem{beny2007generalization}
C{\'e}dric B{\'e}ny, Achim Kempf, and David~W Kribs.
\newblock Generalization of quantum error correction via the {H}eisenberg picture.
\newblock {\em Physical Review Letters}, 98(10):100502, 2007.

\bibitem{beny2007quantum}
C{\'e}dric B{\'e}ny, Achim Kempf, and David~W Kribs.
\newblock Quantum error correction of observables.
\newblock {\em Physical Review A}, 76(4):042303, 2007.

\bibitem{beny2009quantum}
C{\'e}dric B{\'e}ny, Achim Kempf, and David~W Kribs.
\newblock Quantum error correction on infinite-dimensional {H}ilbert spaces.
\newblock {\em Journal of Mathematical Physics}, 50(6), 2009.

\bibitem{bjorck1973numerical}
Ake Bj{\"o}rck and Gene~H Golub.
\newblock Numerical methods for computing angles between linear subspaces.
\newblock {\em Mathematics of Computation}, 27(123):579--594, 1973.

\bibitem{choi2008geometry}
Man-Duen Choi, Michael Giesinger, John~A Holbrook, and David~W Kribs.
\newblock Geometry of higher-rank numerical ranges.
\newblock {\em Linear and Multilinear Algebra}, 56(1-2):53--64, 2008.

\bibitem{choi2006higher2}
Man-Duen Choi, David~W Kribs, and Karol {\.Z}yczkowski.
\newblock Higher-rank numerical ranges and compression problems.
\newblock {\em Linear Algebra and its Applications}, 418(2-3):828--839, 2006.

\bibitem{choi2005quantum}
Man-Duen Choi, David~W Kribs, and Karol Zyczkowski.
\newblock Quantum error correcting codes from the compression formalism.
\newblock {\em Reports on Mathematical Physics}, 58(1):77--91, 2006.

\bibitem{crann2016private}
Jason Crann, David~W Kribs, Rupert~H Levene, and Ivan~G Todorov.
\newblock Private algebras in quantum information and infinite-dimensional complementarity.
\newblock {\em Journal of Mathematical Physics}, 57(1), 2016.

\bibitem{davidson1996c}
Kenneth~R. Davidson.
\newblock {\em C*-Algebras by Example}.
\newblock American Mathematical Society, 1996.

\bibitem{durt2010mutually}
Thomas Durt, Berthold-Georg Englert, Ingemar Bengtsson, and Karol {\.Z}yczkowski.
\newblock On mutually unbiased bases.
\newblock {\em International Journal of Quantum Information}, 8(04):535--640, 2010.

\bibitem{farkas2023mutually}
M{\'a}t{\'e} Farkas, J{\k{e}}drzej Kaniewski, and Ashwin Nayak.
\newblock Mutually unbiased measurements, {H}adamard matrices, and superdense coding.
\newblock {\em IEEE Transactions on Information Theory}, 69(6):3814--3824, 2023.

\bibitem{garling2011clifford}
David~JH Garling.
\newblock {\em Clifford algebras: an introduction}, volume~78.
\newblock Cambridge University Press, 2011.

\bibitem{GLPS}
Hwa-Long Gau, Chi-Kwong Li, Yiu-Tung Poon, and Nung-Sing Sze.
\newblock Higher rank numerical ranges of normal matrices.
\newblock {\em SIAM Journal of Matrix Analysis and its Applications}, 32:23--43, 2011.

\bibitem{gottesman1996d}
Daniel Gottesman.
\newblock Class of quantum error-correcting codes saturating the quantum {H}amming bound.
\newblock {\em Physical Review A}, 54:1862, 1996.

\bibitem{hamilton1844theory}
Wm~R Hamilton.
\newblock Theory of quaternions.
\newblock {\em Proceedings of the Royal Irish Academy (1836-1869)}, 3:1--16, 1844.

\bibitem{hardy1881elements}
Arthur~Sherburne Hardy.
\newblock {\em Elements of quaternions}.
\newblock Ginn, Heath, \& Company, 1881.

\bibitem{hoggar1977new}
SG~Hoggar.
\newblock New sets of equi-isoclinic $n$-planes from old.
\newblock {\em Proceedings of the Edinburgh Mathematical Society}, 20(4):287--291, 1977.

\bibitem{jochym2014quantum}
Tomas Jochym-O'Connor, David~W Kribs, Raymond Laflamme, and Sarah Plosker.
\newblock Quantum subsystems: exploring the complementarity of quantum privacy and error correction.
\newblock {\em Physical Review A}, 90(3):032305, 2014.

\bibitem{jordan1875essai}
Camille Jordan.
\newblock Essai sur la g{\'e}om{\'e}trie {\`a} $ n $ dimensions.
\newblock {\em Bulletin de la Soci{\'e}t{\'e} Math{\'e}matique de France}, 3:103--174, 1875.

\bibitem{kestelman1961anticommuting}
H~Kestelman.
\newblock Anticommuting linear transformations.
\newblock {\em Canadian Journal of Mathematics}, 13:614--624, 1961.

\bibitem{knill1997knill}
Emanuel Knill and Raymond Laflamme.
\newblock Theory of quantum error-correcting codes.
\newblock {\em Physical Review A}, 55:900, 1997.

\bibitem{knill2000theory}
Emanuel Knill, Raymond Laflamme, and Lorenza Viola.
\newblock Theory of quantum error correction for general noise.
\newblock {\em Physical Review Letters}, 84(11):2525, 2000.

\bibitem{kretschmann2008complementarity}
Dennis Kretschmann, David~W Kribs, and Robert~W Spekkens.
\newblock Complementarity of private and correctable subsystems in quantum cryptography and error correction.
\newblock {\em Physical Review A}, 78(3):032330, 2008.

\bibitem{kribs2005quantum}
David~W Kribs.
\newblock A quantum computing primer for operator theorists.
\newblock {\em Linear Algebra and its Applications}, 400:147--167, 2005.

\bibitem{kribs2018quantum}
David~W Kribs, Jeremy Levick, Mike~I Nelson, Rajesh Pereira, and Mizanur Rahaman.
\newblock Quantum complementarity and operator structures.
\newblock {\em Quantum Information and Computation}, 19:67--83, 2019.

\bibitem{kribs2019isoclinic}
David~W Kribs, David Mammarella, and Rajesh Pereira.
\newblock Isoclinic subspaces and quantum error correction.
\newblock {\em Operators and Matrices}, 2(15):571--580, 2021.

\bibitem{li2011generalized}
Chi-Kwong Li and Yiu-Tung Poon.
\newblock Generalized numerical ranges and quantum error correction.
\newblock {\em Journal of Operator Theory}, 66:335--351, 2011.

\bibitem{li2007higher}
Chi-Kwong Li, Yiu-Tung Poon, and Nung-Sing Sze.
\newblock Higher rank numerical ranges and low rank perturbations of quantum channels.
\newblock {\em Journal of Mathematical Analysis and Applications}, 348(2):843--855, 2008.

\bibitem{li2009condition}
Chi-Kwong Li, Yiu-Tung Poon, and Nung-Sing Sze.
\newblock Condition for the higher rank numerical range to be non-empty.
\newblock {\em Linear and Multilinear Algebra}, 57(4):365--368, 2009.

\bibitem{li2008canonical}
Chi-Kwong Li and Nung-Sing Sze.
\newblock Canonical forms, higher rank numerical ranges, totally isotropic subspaces, and matrix equations.
\newblock {\em Proceedings of the American Mathematical Society}, 136(9):3013--3023, 2008.

\bibitem{martinez2008higher}
Ruben~A Martinez-Avendano.
\newblock Higher-rank numerical range in infinite-dimensional {H}ilbert space.
\newblock {\em Operators and Matrices}, 2(2):249--264, 2008.

\bibitem{nielsen2002quantum}
Michael~A Nielsen and Isaac Chuang.
\newblock {\em Quantum {C}omputation and {Q}uantum {I}nformation}.
\newblock Cambridge University Press, 2000.

\bibitem{shor1995pw}
Peter~W Shor.
\newblock Scheme for reducing decoherence in quantum computing memory.
\newblock {\em Physical Review A}, 52:R2493, 1995.

\bibitem{steane1996error}
Andrew~M Steane.
\newblock Error correcting codes in quantum theory.
\newblock {\em Physical Review Letters}, 77(5):793, 1996.

\bibitem{tavakoli2021mutually}
Armin Tavakoli, M{\'a}t{\'e} Farkas, Denis Rosset, Jean-Daniel Bancal, and Jedrzej Kaniewski.
\newblock Mutually unbiased bases and symmetric informationally complete measurements in {B}ell experiments.
\newblock {\em Science Advances}, 7(7):eabc3847, 2021.

\bibitem{woerdeman2008higher}
Hugo~J Woerdeman.
\newblock The higher rank numerical range is convex.
\newblock {\em Linear and Multilinear Algebra}, 56(1-2):65--67, 2008.

\bibitem{wong1960clifford}
Yung-Chow Wong.
\newblock Clifford parallels in elliptic $(2n - 1)$-spaces and isoclinic $n$-planes in {E}uclidean $2n$-space.
\newblock {\em Bulletin of the American Mathematical Society}, 66(4):289--293, 1960.

\bibitem{wong1961isoclinic}
Yung-Chow Wong.
\newblock {\em Isoclinic $ n $-Planes in Euclidean $2 n $-Space, Clifford Parallels in Elliptic $(2n-1) $-Space, and the Hurwitz Matrix Equations}.
\newblock Number~41. American Mathematical Soc., 1961.

\bibitem{wong1977linear}
Yung-Chow Wong.
\newblock {\em Linear geometry in {E}uclidean 4-space}.
\newblock Southeast Asian Mathematical Society (SEAMS), 1977.

\bibitem{wong1990normally}
Yung-Chow Wong and Kam-Ping Mok.
\newblock Normally related $n$-planes and isoclinic $n$-planes in $r^{2n}$ and strongly linearly independent matrices of order $n$.
\newblock {\em Linear Algebra and its Applications}, 139:31--52, 1990.

\bibitem{zauner1999quantum}
Gerhard Zauner.
\newblock {\em Quantum designs}.
\newblock PhD thesis, Ph. D. thesis, University of Vienna Vienna, 1999.

\bibitem{zhang1997quaternions}
Fuzhen Zhang.
\newblock Quaternions and matrices of quaternions.
\newblock {\em Linear Algebra and its Applications}, 251:21--57, 1997.

\end{thebibliography}

\end{document}